\newtheorem{theorem}{Theorem}[section]
\newtheorem{proposition}[theorem]{Proposition}
\newtheorem{corollary}[theorem]{Corollary}
\newtheorem{lemma}[theorem]{Lemma}
\title{A Unique Extension of Rich Words}
\author{Josef Rukavicka\thanks{Department of Mathematics,
Faculty of Nuclear Sciences and Physical Engineering, CZECH TECHNICAL UNIVERSITY
IN PRAGUE
(josef.rukavicka@seznam.cz).}}
\newtheorem{definition}[theorem]{Definition}
\theoremstyle{remark}
\newtheorem{example}[theorem]{Example}
\newtheorem{remark}[theorem]{Remark}
\DeclareMathOperator{\ltrim}{ltrim}
\DeclareMathOperator{\rtrim}{rtrim}
\DeclareMathOperator{\rw}{R}
\DeclareMathOperator{\lps}{lps}
\DeclareMathOperator{\lpps}{lpps}
\DeclareMathOperator{\lpp}{lpp}
\DeclareMathOperator{\lppp}{lppp}
\DeclareMathOperator{\Prefix}{Prf}
\DeclareMathOperator{\Suffix}{Suf}
\DeclareMathOperator{\SuffixUnion}{SufU}
\DeclareMathOperator{\swtop}{S}
\DeclareMathOperator{\StdExtL}{EL}
\DeclareMathOperator{\StdExtR}{ER}
\DeclareMathOperator{\StdExtLa}{EL_a}
\DeclareMathOperator{\StdExtRa}{ER_a}
\DeclareMathOperator{\StdExtWP}{ewp} 
\DeclareMathOperator{\StdExtOLP}{elpp} 
\DeclareMathOperator{\Alphabet}{A}
\DeclareMathOperator{\AlphabetCard}{\vert A\vert}
\DeclareMathOperator{\Factor}{F}
\DeclareMathOperator{\occur}{occur}
\DeclareMathOperator{\ggt}{g}
\DeclareMathOperator{\reduced}{rdc}
\DeclareMathOperator{\ggh}{h}
\DeclareMathOperator{\switch}{sw}
\DeclareMathOperator{\switchSuf}{swSuf}
\DeclareMathOperator{\swc}{spc} 
\DeclareMathOperator{\flexpref}{T}
\DeclareMathOperator{\rext}{K}
\DeclareMathOperator{\maxPow}{maxPow}
\date{\small{October 01, 2019}\\
   \small Mathematics Subject Classification: 68R15}
\begin{document}
\maketitle

\begin{abstract}
A word $w$ is called rich if it contains $\vert w\vert+1$ palindromic factors, including the empty word. We say that a rich word $w$ can be extended in at least two ways if there are two distinct letters $x,y$ such that $wx,wy$ are rich.

Let $\rw$ denote the set of all rich words. Given $w\in \rw$, let $\rext(w)$ denote the set of all words such that if $u\in \rext(w)$ then $wu\in \rw$ and $wu$ can be extended in at least two ways. Let $\omega(w)=\min\{\vert u\vert \mid u\in \rext(w)\}$ and let $\phi(n)=\max\{\omega(w)\mid w\in \rw\mbox{ and }\vert w\vert=n\}$, where $n>0$.
Vesti (2014) showed that $\phi(n)\leq 2n$. In other words, it says that for each $w\in \rw$ there is a word $u$ with $\vert u\vert\leq 2\vert w\vert$ such that $wu\in \rw$ and $wu$ can be extended in at least two ways. 

We prove that $\phi(n)\leq n$. 
In addition we prove that for each real constant $c>0$ and each integer $m>0$ there is $n>m$ such that $\phi(n)\geq (\frac{2}{9}-c)n$. The results hold for each finite alphabet having at least two letters.
\end{abstract}

\section{Introduction}
A word is called a \emph{palindrome} if it is equal to its reversal. Two examples of palindromes are ``noon'' and ``level''. It is known that a word $w$ can contain at most $\vert w\vert +1$ distinct palindromic factors, including the empty word \cite{DrJuPi}. If the bound $\vert w\vert+1$ is attained, the word $w$ is called \emph{rich}. Quite many articles investigated the properties of rich words in recent years, for example \cite{BuLuGlZa2, DrJuPi, GlJuWiZa, RukavickaRichWords2019, Vesti2014}. Some of the properties of rich words are stated in the next section; see Propositions \ref{yy65se85bj5d}, \ref{kkmnd5s8s658}, and \ref{oij5498fr654td222gh}.

In \cite{GlJuWiZa} it was proved that if $w$ is rich then there is a letter $x$ such that $wx$ is also rich.  
In \cite{Vesti2014} it was proved that if $w$ is rich then there is a word $u$ and two distinct letters $x,y$ such that $\vert u\vert\leq 2\vert w\vert$ and $wux, wuy$ are rich. Concerning this result, the author of \cite{Vesti2014} formulated an open question: \begin{itemize}\item Let $w$ be a rich word. How long is the shortest $u$ such that $wu$ can always be extended in at least two ways?\end{itemize} 

In the current article we improve the result from \cite{Vesti2014} and as such, to some extent, we answer to the open question. Let $\rw$ denote the set of all rich words. We say that a rich word $w$ \emph{can be extended in at least two ways} if there are two distinct letters $x,y$ such that $wx,wy$ are rich. Given $w\in \rw$, let $\rext(w)$ denote the set of all words such that if $u\in \rext(w)$ then $wu\in \rw$ and $wu$ can be extended in at least two ways; $\rext(w)$ contains the empty word if $w$ can be extended in at least two ways. Let $\omega(w)=\min\{\vert u\vert \mid u\in \rext(w)\}$ and let $\phi(n)=\max\{\omega(w)\mid w\in \rw\mbox{ and }\vert w\vert=n\}$, where $n>0$. 
The result from \cite{Vesti2014} can be presented as $\phi(n)\leq 2n$.

We show that $\phi(n)\leq n$. It is natural to ask how good this bound is. The rich word $wu$ is called a \emph{unique rich extension} of $w$ if there is no proper prefix $\bar u$ of $u$ such that $w\bar u$ can be extended in at least two ways. In Remark $2.4$ in \cite{Vesti2014} there is an example which shows that there are $w_n,u_n\in \rw$ such that $w_nu_n$ is a unique rich extension of $w_n$ and $\vert u_n\vert=n$, where $n>1$. However in the given example the length of $w_n$ grows significantly more rapidly than the length of $u_n$ as $n$ tends towards infinity. This could suggest that $\lim_{n\rightarrow\infty}\frac{\phi(n)}{n}=0$; we show that this suggestion is false. We prove that for each real constant $c>0$ and each integer $m>0$ there is $n>m$ such that $\phi(n)\geq (\frac{2}{9}-c)n$. 

We explain the idea of the proof. Let $w^R$ denote the reversal of the word $w$.
We construct rich words $\ggh_n=u_nv^Rtv_n$, where $n\geq 3$ such that 
\begin{enumerate} 
\item \label{fjde5649w3d23456}
The word $t$ is the longest palindromic suffix of $u_nv_n^Rt$.
\item
\label{alkd8e3df68e9f59}
For every factor $xpy$ of $tv_n$ we have that $xpx$ is a factor of $u_n$, where $x,y$ are distinct letters and $p$ is a palindrome.
\item \label{jdhkwi55d46d8w}
$2\vert \ggh_n\vert<\vert\ggh_{n+1}\vert$.
\end{enumerate}
Let $\bar vx$ be a prefix of $v_n$, where $x$ is a letter. Let $y$ be a letter distinct from $x$ and let $ypy$ be the longest palindromic suffix of $u_nv_n^Rt\bar vy$. Property \ref{fjde5649w3d23456} implies that $ypy$ is a suffix of $y\bar v^Rt\bar vy$, since $\bar v^Rt\bar v$ is the longest palindromic suffix of $u_nv_n^Rt\bar v$. 
Property \ref{alkd8e3df68e9f59} implies that $ypy$ is not unioccurrent in $u_nv_n^Rt\bar vy$. In consequence $u_nv_n^Rt\bar vy$ is not rich; see Proposition \ref{oij5498fr654td222gh}. Hence there is no proper prefix $v$ of $v_n$ such that $u_nv_n^Rtv$ can be extended in at least two ways. It follows that $\vert v_n\vert \leq \omega(u_nv^Rt)$. Property \ref{jdhkwi55d46d8w} implies that for each $m>0$ there is $n$ such that $\vert\ggh_n\vert>m$.

We will see that to find $u_n$ for given $v_n$ is quite straightforward. The crucial part of our construction is the word $v_n$. To be specific, the word $v_n$ that we will present contains only a ``small'' number of factors $xpy$ defined in Property \ref{alkd8e3df68e9f59}. As a result the length of $u_n$ grows almost linearly with the length of $v_n$ as $n$ tends towards infinity.

\section{Preliminaries}
Consider an alphabet $A$ with $q$ letters, where $q>1$. Let $A^+$ denote the set of all nonempty words over $A$. Let $\epsilon$ denote the empty word, and let $A^*=A^+\cup \{\epsilon\}$. We have that $\rw\subseteq \Alphabet^*$.

Let $\Factor(w)$ be the set of all factors of the word $w\in A^*$; we define that $\epsilon, w\in \Factor(w)$. Let $\Prefix(w)$ and $\Suffix(w)$ be the set of all prefixes and all suffixes of $w\in A^*$ respectively; we define that $\{\epsilon, w\}\subseteq \Prefix(w)\cap\Suffix(w)$. 

Let $\SuffixUnion(v,u)=\bigcup_{t\in \Prefix(u)\setminus\{\epsilon\}}\Suffix(vt)$, where $v,u\in \Alphabet^*$. The set $\SuffixUnion(v,u)$ is the union of sets of suffixes of $vt$, where $t$ is a nonempty prefix of $u$.

We define yet the reversal that we have already used in the introduction: Let $w^R$ denote the reversal of $w\in A^*$; formally if $w=w_1w_2\dots w_k$ then $w^R=w_k\dots w_2w_1$, where $w_i\in \Alphabet$ and $i\in \{1,2,\dots,k\}$. 

Let $\lps(w)$ and $\lpp(w)$ denote the longest palindromic suffix and the longest palindromic prefix of $w\in \Alphabet^*$ respectively. We define that $\lps(\epsilon)=\lpp(\epsilon)=\epsilon$. Let $\lpps(w)$ and $\lppp(w)$ denote the longest proper palindromic suffix and the longest proper palindromic prefix of $w\in \Alphabet^*$ respectively, where $\vert w\vert\geq 1$. If $\vert w\vert=1$ then we define $\lppp(w)=\lpps(w)=\epsilon$.

Let $\rtrim(w)=v$, where $v,w\in \Alphabet^*$, $y\in \Alphabet$, $w=vy$, and $\vert w\vert\geq 1$.
Let $\ltrim(w)=v$ , where $v,w\in \Alphabet^*$, $x\in \Alphabet$, $w=xv$, and $\vert w\vert\geq 1$. The functions $\rtrim(w)$ and $\ltrim(w)$ remove the last and the first letter of $w$ respectively.

Let $\occur(u,v)$ be the number of occurrences of $v$ in $u$, where $u,v\in \Alphabet^+$; formally $\occur(u,v)=\vert \{w\mid w\in \Suffix(u)\mbox{ and }v\in \Prefix(w)\}\vert$. We call a factor $v$ \emph{unioccurrent} in $u$ if $\occur(u,v)=1$.

We list some known properties of rich words that we use in our article. All of them can be found, for instance, in  \cite{GlJuWiZa}.
Recall the notion of a \emph{complete return} \cite{GlJuWiZa}: Given a word $w$ and factors $r,u\in \Factor(w)$, we call the factor $r$ a complete return to $u$ in $w$ if $r$ contains exactly two occurrences of $u$, one as a prefix and one as a suffix. 
\begin{proposition}
\label{yy65se85bj5d}
If $w,u\in\rw\cap \Alphabet^+$, $u\in \Factor(w)$, and $u$ is a palindrome then all complete returns to $u$ in $w$ are palindromes.
\end{proposition}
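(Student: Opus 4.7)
The plan is to analyze $\lps(r)$ for an arbitrary complete return $r$ to $u$ in $w$, relying on the characterization of richness by unioccurrent longest palindromic suffixes (which I expect will appear as Proposition \ref{oij5498fr654td222gh}). The goal will be to show that $\lps(r)$ begins with $u$, which, combined with the complete-return condition, forces $\lps(r)=r$.

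First I would observe that every factor of a rich word is rich, so $r\in\rw$. The unioccurrence characterization then tells us that $\lps(r)$ occurs exactly once in $r$. Since $u$ is a palindromic suffix of $r$, we have $|\lps(r)|\geq |u|$; equality would force $\lps(r)=u$, contradicting that $u$ occurs twice in the complete return $r$. Hence $|\lps(r)|>|u|$.

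Next I would use the palindromicity of both $u$ and $\lps(r)$: since $u$ is a suffix of the palindrome $\lps(r)$, it must also be a prefix of $\lps(r)$. This prefix occurrence of $u$ sits at position $|r|-|\lps(r)|$ inside $r$. By the complete-return hypothesis, the only positions in $r$ at which $u$ appears are $0$ and $|r|-|u|$, and since $|\lps(r)|>|u|$ the latter is incompatible with $|r|-|\lps(r)|$. Hence $|\lps(r)|=|r|$, so $r=\lps(r)$ is a palindrome, as required.

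There is no real obstacle; the argument is a short chain of forced equalities. The one point needing a moment of care is the strict inequality $|\lps(r)|>|u|$, where the definition of complete return is used precisely to rule out the degenerate case $\lps(r)=u$. Everything else follows from the unioccurrence characterization together with the elementary observation that a palindromic suffix of a palindrome is also a prefix.
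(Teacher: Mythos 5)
Your proof is correct. Note that the paper does not give its own proof of this proposition: it is listed among the known properties quoted from \cite{GlJuWiZa}, so there is no internal argument to compare against. Your chain of deductions is essentially the standard proof from that reference and is fully self-contained given Propositions \ref{kkmnd5s8s658} and \ref{oij5498fr654td222gh}; the only step left implicit is that the unioccurrent palindromic suffix of $r$ guaranteed by Proposition \ref{oij5498fr654td222gh} must be $\lps(r)$ itself, which follows from the same observation you already use (a shorter palindromic suffix of $\lps(r)$ would also occur as its prefix, hence would not be unioccurrent).
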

\begin{proposition}
\label{kkmnd5s8s658}
If $w \in \rw$ and $p\in \Factor(w)$ then $p, p^R\in \rw$.
\end{proposition}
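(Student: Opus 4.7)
The plan is to derive the statement from two standard facts: (i) for any word $u$ and letter $x$, the number of distinct palindromic factors of $ux$ exceeds that of $u$ by at most one, and (ii) a word is rich if and only if its reversal is rich. Together these imply that every prefix of a rich word is rich; combined with (ii) this extends to every factor, and a further application of (ii) gives the claim for $p^R$.

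For (i), I observe that any palindromic factor of $ux$ which is not already a factor of $u$ must contain the final position, so it is a palindromic suffix of $ux$. Set $s=\lps(ux)$, and let $t$ be any strictly shorter palindromic suffix of $ux$. Then $t$ is a proper suffix of $s$; being a palindrome, $t=t^R$ coincides with a proper prefix of $s^R=s$. As a proper prefix of $s$, this occurrence of $t$ begins where $s$ begins inside $ux$ and terminates strictly before the final letter, so $t$ is already a factor of $u$. Hence $s$ is the only candidate for a new palindrome. For (ii), the involution $u\mapsto u^R$ is a length-preserving bijection between the factors of $w$ and the factors of $w^R$ which sends palindromes to palindromes, so the two words have the same number of distinct palindromic factors, and richness is preserved.

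With these in hand the proposition follows quickly. If $w\in\rw$ and $u=\rtrim(w)$, then by (i) the number of palindromic factors of $u$ is at least $(\vert w\vert+1)-1=\vert u\vert+1$, and the general upper bound gives equality, so $u$ is rich. Iterating shrinks $w$ one letter at a time and shows that every prefix of $w$ is rich. Applying (ii) to the rich word $w$, the reversal $w^R$ is also rich, so every prefix of $w^R$ is rich, which means every suffix of $w$ is rich. Any factor $p\in\Factor(w)$ is a prefix of some suffix of $w$, hence rich; and $p^R$ is rich by one more application of (ii). The only delicate point is the argument for (i), namely the verification that every palindromic suffix of $ux$ shorter than $\lps(ux)$ reappears as a prefix of $\lps(ux)$ and therefore lies entirely within $u$.
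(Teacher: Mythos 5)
Your proof is correct. The paper does not prove this proposition at all --- it is listed among known properties of rich words and attributed to the literature \cite{GlJuWiZa} --- so there is no in-paper argument to compare against; your write-up is a sound, self-contained reconstruction of the standard proof: the Droubay--Justin--Pirillo step showing that appending a letter creates at most one new palindromic factor (with the key verification that any palindromic suffix of $ux$ shorter than $\lps(ux)$ recurs as a proper prefix of $\lps(ux)$ and hence already lies in $u$), the invariance of richness under reversal, and the resulting closure under prefixes, then suffixes, then arbitrary factors.
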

\begin{proposition}
\label{oij5498fr654td222gh}
A word $w$ is rich if and only if every prefix $p\in \Prefix(w)$ has a unioccurrent palindromic suffix.
\end{proposition}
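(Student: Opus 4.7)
The plan is to establish the standard lemma that appending a single letter to a word $p'$ increases the number of distinct palindromic factors by at most one, and that the increase is realized precisely when the longest palindromic suffix of $p'a$ is unioccurrent in $p'a$. Both directions of the proposition then follow by induction on the length of the prefix.

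The key claim is: for $p'\in\Alphabet^*$ and $a\in\Alphabet$, if $u$ is a palindromic factor of $p'a$ that is not a factor of $p'$, then $u=\lps(p'a)$. Since $u$ must contain the final letter $a$, it is a palindromic suffix of $p'a$, hence a palindromic suffix of $v:=\lps(p'a)$. Because $v$ is itself a palindrome and $u$ is a palindrome, $u$ being a suffix of $v$ forces $u$ also to be a prefix of $v$. If $u$ were strictly shorter than $v$, then the prefix-occurrence of $u$ in $v$ would end strictly before the last letter of $p'a$, giving an occurrence of $u$ inside $p'$ and contradicting the assumption. Hence $u=v$. This palindrome-inside-palindrome step is the only delicate point; once it is in hand the rest is bookkeeping.

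Writing $\pi(w)$ for the number of distinct palindromic factors of $w$, the claim immediately gives $\pi(p'a)\le \pi(p')+1$, with equality iff $\lps(p'a)$ is unioccurrent in $p'a$. For the forward direction, assume $w\in \rw$, so $\pi(w)=|w|+1$; since $\pi(\epsilon)=1$ and $\pi$ increases by at most $1$ per letter, equality must hold at every single-letter extension along $w$. Consequently every nonempty prefix $p$ of $w$ has $\lps(p)$ unioccurrent in $p$, and the empty prefix has $\epsilon$ as a trivially unioccurrent palindromic suffix. For the converse, suppose every prefix of $w$ has some unioccurrent palindromic suffix. For a prefix $p=p'a$, such a suffix $u$ occurs exactly once in $p$, hence not at all in $p'$, so $u$ witnesses $\pi(p)\ge \pi(p')+1$. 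Iterating from $\pi(\epsilon)=1$ yields $\pi(w)\ge |w|+1$, and combined with the always-valid bound $\pi(w)\le |w|+1$, we conclude that $w$ is rich.
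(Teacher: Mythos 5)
Your proof is correct. Note that the paper does not prove this proposition at all: it is quoted as a known property of rich words with a reference to the literature (it goes back to Droubay--Justin--Pirillo and Glen--Justin--Widmer--Zamboni), and your argument is essentially the standard one found there. The one delicate step --- that a new palindromic factor of $p'a$ must be the longest palindromic suffix $\lps(p'a)$, via the observation that a palindromic suffix of a palindrome is also a prefix of it and would otherwise occur already inside $p'$ --- is handled correctly, and the two inductions (equality of $\pi(p'a)=\pi(p')+1$ at every step for the forward direction, and the matching lower bound for the converse) close both implications. So there is nothing to compare against within the paper itself; your write-up would serve as a self-contained proof of the cited fact.
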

From Proposition \ref{kkmnd5s8s658} and Proposition \ref{oij5498fr654td222gh} we have an obvious corollary.
\begin{corollary}
\label{odd58e62n2so}
A word $w$ is rich if and only if every suffix $p\in \Suffix(w)$ has a unioccurrent palindromic prefix.
\end{corollary}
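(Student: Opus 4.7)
The plan is to reduce the corollary to Proposition~\ref{oij5498fr654td222gh} via reversal. First I would establish three elementary dictionary facts: (i) $w$ is rich if and only if $w^R$ is rich, obtained by applying Proposition~\ref{kkmnd5s8s658} with $p=w$ to each of $w$ and $w^R$; (ii) for any $u,v\in \Alphabet^+$ we have $\occur(u,v)=\occur(u^R,v^R)$, so $v$ is unioccurrent in $u$ exactly when $v^R$ is unioccurrent in $u^R$; and (iii) a palindrome $s$ is a prefix of a suffix $p$ of $w$ if and only if $s=s^R$ is a suffix of the prefix $p^R$ of $w^R$, since reversal swaps prefixes and suffixes and fixes palindromes.

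With these in hand, the corollary becomes a direct translation. For the forward direction, assume $w$ is rich and let $p\in \Suffix(w)$. Then $w^R$ is rich by (i) and $p^R\in \Prefix(w^R)$, so Proposition~\ref{oij5498fr654td222gh} yields a unioccurrent palindromic suffix $s$ of $p^R$; applying (ii) and (iii), $s=s^R$ is a unioccurrent palindromic prefix of $p$. The converse direction is identical after swapping the roles of prefixes and suffixes and working with $w^R$ in place of $w$, then invoking Proposition~\ref{oij5498fr654td222gh} in the other direction to conclude that $w^R$, and hence by (i) also $w$, is rich.

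I do not expect any real obstacle; the only step requiring a moment of care is the identity $\occur(u,v)=\occur(u^R,v^R)$, which is immediate since reversal is a length-preserving bijection on factors that sends an occurrence of $v$ at position $i$ in $u$ to an occurrence of $v^R$ at the mirror position in $u^R$. This is exactly why the authors call the corollary ``obvious.''
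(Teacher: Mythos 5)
Your proof is correct and follows exactly the route the paper intends: the paper gives no explicit argument but states that the corollary follows from Propositions~\ref{kkmnd5s8s658} and~\ref{oij5498fr654td222gh}, which is precisely your reversal reduction. The three dictionary facts you isolate (richness is reversal-invariant, occurrences are preserved under simultaneous reversal, and reversal exchanges palindromic prefixes of suffixes with palindromic suffixes of prefixes) are all that is needed, and each is justified correctly.
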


\section{Standard Extension}
We define a \emph{left standard extension} and a \emph{right standard extension} of a rich word. The construction of a standard extension has already been used in \cite{Vesti2014}. The name ``standard extension'' has been introduced later in \cite{RukavickaRichWords2019}.  Here we use a different notation and we distinguish a left and a right standard extension. 
\begin{definition}
Let $j\geq 0$ be a nonnegative integer, $w\in \rw$, and $\vert w\vert\geq 1$. We define $\StdExtR^j(w)$, $\StdExtL^j(w)$ as follows:
\begin{itemize}
\item
$\StdExtR^0(w)=\StdExtL^0(w)=w$.
\item
$\StdExtL(w)=\StdExtL^1(w)=xw$, where $x\in \Alphabet$ is such that $\lppp(w)x\in \Prefix(w)$.
\item
$\StdExtR(w)=\StdExtR^1(w)=wx$, where $x\in \Alphabet$ is such that $x\lpps(w)\in \Suffix(w)$.
\item
$\StdExtL^j(w)=\StdExtL(\StdExtL^{j-1}(w))$, where $j>1$.
\item
$\StdExtR^j(w)=\StdExtR(\StdExtR^{j-1}(w)$, where $j>1$.
\end{itemize}
Let $\StdExtLa(w)=\{\StdExtL^j(w)\mid j\geq 0\}$. We call $p\in\StdExtLa(w)$ a left standard extension of $w$. 
Let $\StdExtRa(w)=\{\StdExtR^j(w)\mid j\geq 0\}$. We call $p\in \StdExtRa(w)$ a right standard extension of $w$. 
\end{definition}
\begin{remark}
It is easy to see that $\StdExtR^j(w)=(\StdExtL^j(w^R))^R$ and $\StdExtL^j(w)=(\StdExtR^j(w^R))^R$, where $j\geq 0$.

If $x\in \Alphabet$ then $\StdExtR(x)=\StdExtL(x)=xx$, since $\lppp(x)=\lpps(x)=\epsilon$.
\end{remark}
\begin{example}
Let $\Alphabet=\{0,1,2,3\}$ and $w=010200330$. Then we have:
\begin{itemize}
\item
$\lppp(w)=010$ and $\lpps(w)=0330$.
\item
$\StdExtR(w)=0102003300$, $\StdExtR^2(w)=01020033002$, \\ $\StdExtR^3(w)=010200330020$, $\StdExtR^4(w)=0102003300201$, \\ $\StdExtR^5(w)=01020033002010$, $\StdExtR^6(w)=010200330020102$, \\ $\StdExtR^7(w)=0102003300201020$.
\item
$\StdExtL(w)=2010200330$, $\StdExtL(w)^2=02010200330$, \\ $\StdExtL(w)^3=002010200330$, $\StdExtL(w)^4=3002010200330$, \\ $\StdExtL(w)^5=33002010200330$, $\StdExtL(w)^6=033002010200330$, \\ $\StdExtL(w)^6=0033002010200330$, $\StdExtL(w)^7=20033002010200330$.
\end{itemize}
\end{example}
A left and a right standard extension of a rich word $w$ is rich. In consequence, every rich word $w$ can be extended to rich words $wx,yw$ for some letters $x,y$; this has already been proved in \cite{GlJuWiZa, RukavickaRichWords2019, Vesti2014}.
\begin{lemma}
\label{fjis5e64s8e6jj5}
If $w\in \rw$ and $\vert w\vert\geq 1$ then $\StdExtRa(w)\cup\StdExtLa(w)\subseteq \rw$.
\end{lemma}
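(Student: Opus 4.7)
The plan is to reduce everything to a single right-extension step by symmetry and induction, and then verify that step by combining Proposition \ref{oij5498fr654td222gh} (richness characterized by unioccurrent palindromic suffixes of every prefix) with Proposition \ref{yy65se85bj5d} (complete returns to a palindrome in a rich word are palindromes) doing the real work.

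First I would strip away the easy parts. Because the remark following the definition gives $\StdExtL^j(w)=(\StdExtR^j(w^R))^R$, and Proposition \ref{kkmnd5s8s658} shows richness is invariant under reversal, it suffices to prove $\StdExtRa(w)\subseteq\rw$. A trivial induction on $j$ further reduces this to the one-step claim: if $w\in\rw$ with $|w|\geq 1$ and $x$ is the letter for which $x\lpps(w)\in\Suffix(w)$, then $wx\in\rw$.

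To prove the one-step claim, set $s=\lpps(w)$. By Proposition \ref{oij5498fr654td222gh} every proper prefix of $wx$ is already a prefix of the rich word $w$ and therefore has a unioccurrent palindromic suffix, so I only have to find one for $wx$ itself. I would first show $\lps(wx)=xsx$: indeed $xsx$ is a palindromic suffix of $wx$ since $xs$ is a suffix of $w$, and any strictly longer palindromic suffix of $wx$ must start and end with $x$, so it has the form $xqx$ with $q$ a palindrome that is a proper suffix of $w$ strictly longer than $\lpps(w)$, which is impossible.

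The heart of the proof is showing $xsx$ is unioccurrent in $wx$, and this is the one step where I expect any real work. Assume a second occurrence exists; it must lie entirely inside $w$, so its middle $s$ is an occurrence of $s$ in $w$ distinct from the one that appears as the suffix of $w$. Hence $s$ occurs at least twice in $w$, and the complete return $r$ to $s$ formed by the last two occurrences is a palindromic suffix of $w$ with $|r|>|s|$ by Proposition \ref{yy65se85bj5d}. If $r$ is a proper suffix of $w$ this contradicts $s=\lpps(w)$ immediately. The subtle case is $r=w$, in which $w$ itself is a palindrome containing exactly two occurrences of $s$, forced to be the prefix- and suffix-occurrences; but the middle $s$ of the postulated factor $xsx\in\Factor(w)$ lies strictly inside $w$, producing a third occurrence of $s$ and thus a contradiction. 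This boundary case is the only place where care is needed; everything else is bookkeeping.
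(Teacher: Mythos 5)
Your proposal is correct and follows essentially the same route as the paper: reduce by reversal symmetry and induction to a single application of $\StdExtR$, identify $\lps(wx)=x\lpps(w)x$, and use Proposition \ref{yy65se85bj5d} to show $\lpps(w)$ is either unioccurrent in $w$ or occurs exactly twice (as prefix and suffix, $w$ being a complete return), whence $x\lpps(w)x$ is unioccurrent in $wx$ and Proposition \ref{oij5498fr654td222gh} applies. You merely spell out more explicitly the boundary case $r=w$ that the paper compresses into ``in either case $xpx$ is unioccurrent.''
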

\begin{proof}
Since $\StdExtL^j(w)=(\StdExtR^j(w^R))^R$ and since for every $t\in \StdExtRa(w)\setminus\{w\}$ there is a rich word $\bar t\in \StdExtRa(w)$ such that $t=\StdExtR(\bar t)$, it is enough to prove that $\StdExtR(w)\in \rw$.
 
Let $xpx=\lps(\StdExtR(w))$, where $x\in \Alphabet$. Because $w\in \rw$, Proposition \ref{oij5498fr654td222gh} implies that we need to prove that $xpx$ is unioccurrent in $\StdExtR(w)$. Realize that $p=\lpps(w)$; it means that $p$ is either unioccurrent in $w$ or $w$ is a complete return to $p$. In either case $xpx$ is unioccurrent in $\StdExtR(w)$. This completes the proof.
\end{proof}

\section{A unique rich extension}
We formally define a unique rich extension mentioned in the introduction. In addition we define a \emph{flexed point} of a rich word.
\begin{definition} 
If $u,v\in \rw\cap \Alphabet^+$, $v\in \Prefix(u)$, and \[\Prefix(\rtrim(u))\cap \{vt\mid t\in \omega(v)\}=\emptyset\] then we call $u$ a \emph{unique rich extension} of $v$.

Given $v\in \rw$ with $\vert v\vert>1$, let \[\flexpref(v)=\{ux\mid ux\in \Prefix(v)\mbox{ and }x\in \Alphabet\mbox{ and }ux\not= \StdExtR(u)\}\mbox{.}\] We call $w\in \flexpref(v)$ a flexed point of $v$. 
\end{definition}
\begin{remark} 
Note that if $x\in \Alphabet$ and $ux$ is a flexed point of a rich word $v$ then $u$ can be extended in at least two ways. A similar notion of a ``flexed palindrome'' has been used in \cite{RukavickaRichWords2019}.
\end{remark}
\begin{example}
Let $\Alphabet=\{0,1,2\}$. 
\begin{itemize}
\item
The rich word $00101$ can be extended in at least two ways, because $001010$, $001011$, and $001012$ are rich.
\item
The rich word $20010110$ cannot be extended in at least two ways because $200101100$ and $200101102$ are not rich. Only the right standard extension $200101101$ is rich. Hence $200101101$ is a unique rich extension of $20010110$. 
\item
If $w=201011011101111011111001$ then $w1111$ is unique rich extension of $w$; this example is a modification of the example in Remark $2.4$ in \cite{Vesti2014}.
\item
If  $w=2010110111011110111$ then the set of flexed points of $w$ is: \[\begin{split}\flexpref(w)=\{20,201,20101,201011,2010110111, \\ 20101101110111, 201011011101111\}\mbox{.}\end{split}\]
\end{itemize}
\end{example}
There is a connection between a unique rich extension and a right standard extension. 
\begin{lemma}
\label{ujdd5js666e566d5}
If $u$ is a unique rich extension of $w$ then $u\in \StdExtRa(w)$.
\end{lemma}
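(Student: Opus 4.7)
I would prove this by induction on $k=|u|-|w|\geq 0$.

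\textbf{Base case.} If $k=0$ then $u=w=\StdExtR^0(w)\in\StdExtRa(w)$.

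\textbf{Inductive step.} Suppose $k\geq 1$ and the claim holds for smaller values. Write $u=u'x$ with $u'=\rtrim(u)$ and $x\in\Alphabet$. I would first check that $u'$ is itself a unique rich extension of $w$: it is rich as a factor of a rich word (Proposition~\ref{kkmnd5s8s658}); $w\in\Prefix(u')$ since $|u'|\geq|w|$; and every prefix of $\rtrim(u')$ is also a prefix of $\rtrim(u)$, so the defining empty-intersection condition for $u$ is inherited by $u'$. The induction hypothesis therefore gives $u'=\StdExtR^j(w)$ for some $j\geq 0$.

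Next I would show $u=\StdExtR(u')$, which finishes the argument because then $u=\StdExtR^{j+1}(w)\in\StdExtRa(w)$. The unique-extension hypothesis applied to the prefix $u'$ of $\rtrim(u)$ forbids $u'\in\{wt\mid t\in\rext(w)\}$; since $u'$ is uniquely of the form $w\cdot t$ with $t\in\Alphabet^*$ and is already rich, this membership is equivalent to ``$u'$ can be extended in at least two ways''. Hence $u'$ \emph{cannot} be extended in at least two ways. On the other hand, Lemma~\ref{fjis5e64s8e6jj5} gives $\StdExtR(u')\in\rw$, so at least one letter $y$ satisfies $u'y\in\rw$; by the previous sentence, exactly one such $y$ exists. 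Since $u=u'x\in\rw$ is such an extension, we must have $x=y$, so $u=\StdExtR(u')$ as required.

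The main point to be careful about is parsing the slightly terse definition of unique rich extension — specifically, verifying that $u'\in\{wt\mid t\in\rext(w)\}$ is exactly the assertion ``$u'$ is a rich extension of $w$ that can itself be extended in at least two ways'' — and checking that this ``no intermediate two-way extension'' property is genuinely inherited when we chop off the last letter. Once those points are settled, the induction is essentially mechanical and relies only on Lemma~\ref{fjis5e64s8e6jj5}.
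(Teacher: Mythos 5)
Your proof is correct and follows essentially the same route as the paper's: both arguments hinge on the observation that if some intermediate prefix $\bar u$ of $u$ were followed by a letter other than the standard-extension letter, then $\bar ux$ and $\StdExtR(\bar u)$ (rich by Lemma~\ref{fjis5e64s8e6jj5}) would witness a two-way extension of $\bar u$, contradicting uniqueness. The paper phrases this as a one-step contradiction at the first point of divergence from $\StdExtRa(w)$, while you package the identical content as an induction on $\vert u\vert-\vert w\vert$.
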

\begin{proof}
Suppose there is $\bar ux\in \Prefix(u)$ such that $\bar u\in \StdExtRa(w)$, $x\in \Alphabet$, and $\bar ux\not \in \StdExtRa(w)$. Then obviously $\bar u$ can be extended in at least two ways, since both $\bar ux$ and $\StdExtR(\bar u)$ are rich. Hence $u$ cannot be a unique rich extension of $w$. The lemma follows.
\end{proof}

To simplify the formulation of next lemmas and propositions concerning a unique rich extension we define an auxiliary set $\Gamma$ as follows: $(v,\bar v,u)\in \Gamma$ if $v\bar vu$ is a unique rich extension of $v\bar v$ and $\lpps(v\bar v)=\bar v$, where 
$v,\bar v, u\in \rw\cap\Alphabet^+$.

We show that if $wu$ is unique rich extension of $w$, then $\lpps(w)$ is unioccurrent in $\lpps(w)u$. 
\begin{proposition}
\label{uuhj54t58p}
If $(v,\bar v,u)\in \Gamma$ 
then $\occur(\bar vu,\bar v)=1$.
\end{proposition}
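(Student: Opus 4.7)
The plan is a proof by contradiction. Suppose $\occur(\bar v u,\bar v)\ge 2$ and let $j$ be the smallest positive integer such that $\bar v$ occurs at position $j$ of $\bar v u$. Set $a=|v|$, $b=|\bar v|$, and let $w_j$ be the prefix of $v\bar v u$ of length $a+b+j$; by the minimality of $j$, the two occurrences of $\bar v$ at positions $a$ and $a+j$ in $w_j$ are consecutive. Lemma \ref{ujdd5js666e566d5} gives $v\bar v u\in\StdExtRa(v\bar v)$, so writing $w_k=\StdExtR^k(v\bar v)$ and $w_k = w_{k-1}x_k$, the extension letter $x_k$ is characterized by $x_k\lpps(w_{k-1})\in\Suffix(w_{k-1})$. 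Proposition \ref{yy65se85bj5d} forces the complete return $r = \bar v x_1x_2\cdots x_j$ running from position $a$ to position $a+j$ in $w_j$ to be a palindrome, giving the identity
\[
x_j x_{j-1}\cdots x_1\bar v = \bar v x_1 x_2\cdots x_j.
\]

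Next I would trace $\lpps(w_k)$ through the extensions. A short induction shows $\lps(w_k) = x_k\lpps(w_{k-1})x_k$: any strictly larger palindromic suffix of $w_k$ would require a palindromic suffix of $w_{k-1}$ longer than $\lpps(w_{k-1})$, which can only be $w_{k-1}$ itself, and such a choice cannot be preceded by $x_k$ inside $w_{k-1}$ for length reasons. In the generic range, with no intermediate $w_{k-1}$ a palindrome, iterating yields $|\lpps(w_{k-1})| = b + 2(k-1)$, and hence $x_k = v[a-k]$ for $k\le a$. Combined with the palindromicity identity for $r$ and the palindromicity of $\bar v$, character matching gives $v[a-k] = \bar v[j-k]$ for $k=1,\ldots,\min(j,b)$ and, when $j\le b$, that $\bar v$ has period $j$. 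Unpacking these constraints forces $\bar v$ to occur in $v\bar v$ as a factor starting at position $a-j$, distinct from the prescribed occurrence at position $a$.

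If $v\bar v$ is not a palindrome then $\lps(v\bar v) = \lpps(v\bar v) = \bar v$, and Proposition \ref{oij5498fr654td222gh} forces $\bar v$ to be unioccurrent in $v\bar v$, contradicting the second occurrence at position $a-j$. The main obstacles I expect are the corner cases: (i) $v\bar v$ is itself a palindrome, where the $\lps$-unioccurrence argument is vacuous for $\bar v$; here I would apply Corollary \ref{odd58e62n2so} to the suffix $\bar v u$ of $v\bar v u$ and argue that its longest palindromic prefix must coincide with $\bar v$; and (ii) $j>a$ or some intermediate $w_{k-1}$ is a palindrome, where the clean formula $x_k = v[a-k]$ breaks down and one must analyze how $\lpps$ resets at a palindromic step, iterating the periodicity so as still to produce a duplicate palindromic factor that contradicts richness of a suitable prefix.
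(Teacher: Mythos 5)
Your main line --- the complete return to $\bar v$ is a palindrome by Proposition \ref{yy65se85bj5d}, the standard-extension letters satisfy $x_k=v[a-k]$, hence a second occurrence of $\bar v$ inside $v\bar v$ contradicting unioccurrence of $\lps(v\bar v)=\bar v$ --- is sound in the generic case, and it is a genuinely different route from the paper, whose entire proof is the observation that in both cases of the proof of Theorem 2.1 of Vesti the two-way-extendable extension constructed there satisfies the stated unioccurrence and that a unique rich extension must be an initial segment of it. But the corner cases you defer are not removable technicalities; they are exactly where your toolkit fails. After invoking Lemma \ref{ujdd5js666e566d5} you use only richness and the membership $v\bar vu\in\StdExtRa(v\bar v)$, and never again the hypothesis that no intermediate prefix is two-way extendable. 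Those weaker hypotheses do not imply the conclusion. Take $v=10$, $\bar v=010$: then $v\bar v=10010$ is not a palindrome, $\lpps(10010)=010$, and $\StdExtR^3(10010)=10010010\in\rw$, i.e.\ $u=010$, yet $\occur(\bar vu,\bar v)=\occur(010010,010)=2$. Here $j=3>a=2$, precisely your corner case (ii); the triple fails to lie in $\Gamma$ only because $100100$ and $100101$ are both rich, so $10010$ is already extendable in two ways. Similarly $v\bar v=010$, $u=10$ defeats case (i). So any completion of either corner case must use the uniqueness hypothesis in an essential way that your sketch does not.

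Two further specific points. First, your proposed fix for case (i) cannot work as stated: under the contradiction hypothesis the palindromic complete return $r=\bar vx_1\cdots x_j$ is a palindromic prefix of $\bar vu$ strictly longer than $\bar v$, so $\lpp(\bar vu)\neq\bar v$, and Corollary \ref{odd58e62n2so} then only gives unioccurrence of $\lpp(\bar vu)$, which says nothing about $\bar v$. Second, the sub-case ``some intermediate $w_{k-1}$ is a palindrome'' is vacuous when $v\bar v$ is not a palindrome: your own count $\vert\lps(w_k)\vert=b+2k$ versus $\vert w_k\vert=a+b+k$ shows the first palindrome in the chain occurs exactly at $k=a$, where $w_a=v\bar vv^R$. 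So the genuine remaining work is the case $j>a$ (continuing the analysis past the palindrome $v\bar vv^R$, where $\lpps$ resets) and the case of palindromic $v\bar v$; these are exactly where the counterexamples above live, and neither is resolved by the proposal.
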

\begin{proof}
The proposition follows from the proof of Theorem $2.1$ in \cite{Vesti2014}.
The author shows that a rich word $w$ can be extended into a rich word $w\bar w$ in such a way that $a^n$ is a suffix of $w\bar w$, where $a^n$ is the largest power of some letter $a\in \Alphabet$. It is proved that $w\bar w$ can be extended in at least two ways. In both cases distinguished in the proof of Theorem $2.1$ in \cite{Vesti2014} it is easy to see that $\occur(\lpps(w)\bar w, \lpps(w))=1$. The proposition follows. 
\end{proof}

We present two simple properties of a unique rich extension.
\begin{lemma}
\label{ujdd54fvb1a2w}
Let $(v,\bar v,u)\in \Gamma$.
\begin{enumerate}
\item
If  $\vert u\vert \leq\vert v\vert$ then $u^R \in \Suffix(v)$.
\item
If $\vert u\vert \geq\vert v\vert$ then $v^R \in \Prefix(u)$.
\end{enumerate}
\end{lemma}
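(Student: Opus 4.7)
The plan is to realize $v\bar vu$ as an iterated right standard extension of $v\bar v$ via Lemma~\ref{ujdd5js666e566d5} and then show that at each extension step the appended letter is forced to be a specific letter of $v$ read from right to left. I will write $u=u_1u_2\cdots u_k$ with $k=|u|$ and $u_j\in\Alphabet$, set $w_i=v\bar vu_1\cdots u_i$ so that $w_i=\StdExtR(w_{i-1})$ for $1\le i\le k$, and define palindromes $P_0=\bar v$ and $P_i=u_iP_{i-1}u_i=u_iu_{i-1}\cdots u_1\,\bar v\,u_1u_2\cdots u_i$. The aim is to prove by parallel induction that (i) $\lpps(w_i)=P_i$ for $0\le i\le\min(k,|v|-1)$, and (ii) $(u_1u_2\cdots u_i)^R\in\Suffix(v)$ for $0\le i\le\min(k,|v|)$. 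Given (ii), Part~1 will be the instance $i=k$, and Part~2 will be the instance $i=|v|$, where the resulting length-$|v|$ suffix of $v$ must equal $v$ itself, so $v^R=u_1\cdots u_{|v|}\in\Prefix(u)$.

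For (i), the base case is the hypothesis $\lpps(v\bar v)=\bar v$ built into $(v,\bar v,u)\in\Gamma$. For the inductive step I plan to use the standard palindrome factorization: every palindromic suffix $Q$ of $w_i$ ends in $u_i$ and hence begins with $u_i$, so $Q=u_iQ'u_i$ for a palindrome $Q'$ of length $|Q|-2$; stripping the final letter shows $u_iQ'\in\Suffix(w_{i-1})$ and in particular $Q'\in\Suffix(w_{i-1})$. If one had $|P_i|<|Q|<|w_i|$, then $|Q'|>|P_{i-1}|=|\lpps(w_{i-1})|$ would force $Q'=w_{i-1}$, but this yields $|Q|=|w_{i-1}|+2>|w_i|$, a contradiction. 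Thus $\lpps(w_i)=P_i$ whenever $i<|v|$.

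For (ii), the inductive step uses (i) at index $i-1$ (valid since $i-1\le|v|-1$) to identify $u_i$ as the letter of $w_{i-1}$ immediately preceding $P_{i-1}$, located at position $|w_{i-1}|-|P_{i-1}|=|v|-i+1$. Since $1\le|v|-i+1\le|v|$, this position lies inside $v$, so $u_i$ coincides with the $(|v|-i+1)$st letter of $v$; combined with the inductive hypothesis that the last $i-1$ letters of $v$ are $u_{i-1}\cdots u_1$, this extends the match to $u_iu_{i-1}\cdots u_1\in\Suffix(v)$. The hard part will be (i), because the longest proper palindromic suffix of a rich word is delicate to pin down in general; the factorization $Q=u_iQ'u_i$ bypasses this by reducing the question to information already known about $\lpps(w_{i-1})$, and the only potential escape $Q'=w_{i-1}$ is ruled out purely on length grounds, with no case split on whether $w_{i-1}$ happens to be a palindrome.
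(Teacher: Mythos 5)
Your proof is correct and follows the same route as the paper: invoke Lemma~\ref{ujdd5js666e566d5} to see that $v\bar vu\in\StdExtRa(v\bar v)$, and observe that the right standard extension of $v\bar v$ appends $v^R$ letter by letter, so the two words must agree on their common prefix. The paper dismisses the latter fact with ``obviously $v\bar vv^R\in\StdExtRa(v\bar v)$''; your induction on $\lpps(w_i)=u_i\cdots u_1\bar v u_1\cdots u_i$ is exactly the verification of that claim, carried out in full.
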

\begin{proof}
Obviously $v\bar vv^R\in \StdExtRa(v\bar v)$.  Lemma \ref{ujdd5js666e566d5} implies that $v\bar vu\in \StdExtRa(v\bar v)$. The lemma follows.
\end{proof}
The next proposition discusses  words of the form $v\bar vux$, where $v\bar vux$ is unique rich extension of $v\bar v$, $x$ is a letter, $\bar v$ is the longest proper palindromic suffix of $v\bar v$, and $\bar vux$ is a flexed point of  $\bar vux$. The proposition asserts that there are words $t_1, t_2$ such that $v=t_1t_2$, $xu^R$ is a proper suffix of $t_2$, and $\bar vt_2^R$ is a flexed point of $\bar vt_2^R$. In particular it implies that $\vert v\vert >\vert ux\vert$.
\begin{proposition}
\label{ppph522g1k}
If $(v,\bar v,ux)\in \Gamma$ and $\bar vux\in \flexpref(\bar vux)$ then there exist $t_1,t_2\in \rw$ such that
\begin{itemize}
\item 
$v=t_1t_2$,
\item
$xu^R\in \Suffix(\ltrim(t_2))$, and
\item 
$\bar vt_2^R\in \flexpref(\bar vt_2^R)$.
\end{itemize}
\end{proposition}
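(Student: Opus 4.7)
The plan is to leverage the standard-extension machinery to pin down the shape of $v$ explicitly, and then use the flexed-point hypothesis to locate the right cut-point inside $v$.

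First I would invoke Lemma~\ref{ujdd5js666e566d5} to assert that $v\bar v ux\in\StdExtRa(v\bar v)$. An induction on $j=0,1,\dots,|v|$, using the standard identity $\lps(wa)=a\cdot\lpps(w)\cdot a$ whenever $wa$ is not itself a palindrome (which follows from a short argument: any longer palindromic suffix of $wa$ would trim down to a proper palindromic suffix of $w$ exceeding $\lpps(w)$), shows that $\StdExtR^{j}(v\bar v)=v\bar v\cdot v^R[1..j]$ for $j\le|v|$; this is essentially the ``obviously'' appearing in the proof of Lemma~\ref{ujdd54fvb1a2w}. Consequently, as soon as $|ux|\le|v|$ we obtain $ux=v^R[1..|ux|]$, and in particular $xu^R$ is a suffix of $v$.

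The delicate step is the strict inequality $|ux|<|v|$. I would argue by contradiction: if $|ux|\ge|v|$, then the standard-extension sequence of $v\bar v$ passes through (or terminates at) the palindrome $v\bar v v^R$. Combining the flexed-point hypothesis $\bar vux\in\flexpref(\bar vux)$ with Proposition~\ref{uuhj54t58p} — which ensures $\bar v$ is unioccurrent inside $\bar v u$ and, by transport through the palindromic structure of $v\bar v v^R$, controls which letters can legitimately extend $v\bar v v^R$ richly — one should produce two distinct rich right-extensions of some strict prefix of $v\bar v ux$ (most naturally of $v\bar v v^R$ itself), contradicting the unique-rich-extension clause of $\Gamma$. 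I expect the subcase $|ux|=|v|$, where $v\bar v v^R$ coincides with $v\bar v ux$ and so is not itself controlled by $\Gamma$, to be the hardest part and to require pushing the analysis one step further in the standard-extension chain.

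Once $|ux|<|v|$ is available, I would choose $t_2=v[|v|-k+1..|v|]$ for the smallest $k\ge|u|+2$ with the property that $v^R[k]$ disagrees with the standard-extension letter of $\bar v\cdot v^R[1..k-1]$; equivalently, the smallest $k\ge|u|+2$ for which $\bar v\cdot v^R[1..k]$ is a flexed point of itself. Existence of such a $k$ is the second place where the unique-rich-extension hypothesis enters: were the standard extension starting from $\bar vux$ to continue in lock-step with $v^R$ all the way to position $|v|$, one could reconstruct a second rich right-extension at some intermediate $v\bar v v^R[1..j]$ with $|ux|<j<|v|$, again contradicting $\Gamma$; the initial deviation at $k=|u|+1$ recorded by the flexed-point hypothesis guarantees that the two standard-extension tracks have already parted. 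Setting $t_1=v[1..|v|-k]$ gives $v=t_1t_2$, the equality $t_2^R=v^R[1..k]$ makes $\bar v t_2^R\in\flexpref(\bar v t_2^R)$ by the choice of $k$, and $t_2$ has $xu^R$ as its last $|u|+1$ letters (hence $xu^R\in\Suffix(\ltrim(t_2))$ since $|t_2|=k\ge|u|+2$). Richness of $t_1$ and $t_2$ is immediate from Proposition~\ref{kkmnd5s8s658}.
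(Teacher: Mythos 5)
Your opening step (using Lemma~\ref{ujdd5js666e566d5} and the shape of right standard extensions to get $ux\in\Prefix(v^R)$, hence $xu^R\in\Suffix(v)$ once $\vert ux\vert\le\vert v\vert$) is sound and matches what the paper extracts from Lemma~\ref{ujdd54fvb1a2w}. But the two load-bearing claims — that $\vert ux\vert<\vert v\vert$, and that some $k$ with $\vert u\vert+2\le k\le\vert v\vert$ makes $\bar v\,v^R[1..k]$ a flexed point of itself — are exactly the content of the proposition, and your arguments for both are not only unfinished but rest on a misreading of $\Gamma$. The unique-rich-extension clause only forbids two-way extensions of prefixes of $\rtrim(v\bar vux)=v\bar vu$, i.e.\ of the words $v\bar v\bar u$ with $\bar u\in\Prefix(u)$. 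The words $v\bar v\,v^R[1..j]$ with $j>\vert ux\vert$ that you propose to contradict it with are \emph{longer} than $v\bar vux$ and are simply not constrained by $\Gamma$; the same objection applies to your treatment of $v\bar v v^R$ when $\vert ux\vert=\vert v\vert$, a case you yourself flag as open. So the claimed contradictions are vacuous as stated, and no mechanism is given for actually producing two rich extensions of any controlled prefix.

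The idea you are missing is the paper's use of the \emph{rejected} extension letter. Set $w=\lpps(\bar vu)$ and let $y$ be the letter with $yw\in\Suffix(\bar vu)$; the hypothesis $\bar vux\in\flexpref(\bar vux)$ gives $y\not=x$. Since $v\bar vux$ is the unique rich extension, $v\bar vuy\notin\rw$, so by Proposition~\ref{oij5498fr654td222gh} its longest palindromic suffix $ywy$ is not unioccurrent, i.e.\ $ywy$ occurs again inside $v\bar vu$. Unioccurrence of $w$ in $\bar vu$ and of $\bar v$ in $\bar vu$ (Proposition~\ref{uuhj54t58p}) forces that second occurrence to begin inside $v$, and to begin more than $\vert ux\vert$ letters from the right end of $v$; this single fact simultaneously yields $\vert v\vert>\vert ux\vert$ and defines the cut $v=t_1t_2$ with $xu^R\in\Suffix(\ltrim(t_2))$. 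Finally, the flexed-point conclusion for $\bar vt_2^R$ does not come from chasing standard-extension tracks at all: one observes $\occur(t_2\bar v,w)>1$ and invokes Proposition~\ref{yy65se85bj5d} (complete returns to palindromes are palindromes) to rule out $\lppp(\ltrim(t_2\bar v))=w$, whence $t_2\bar v\not=\StdExtL(\ltrim(t_2\bar v))$ and, by reversal, $\bar vt_2^R\in\flexpref(\bar vt_2^R)$. Without some substitute for this $ywy$ argument, your outline cannot be completed.
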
	
\begin{proof}
Let $w=\lpps(\bar vu)$ and let $y\in \Alphabet$ be such that  $yw\in \Suffix(\bar vu)$. Since  $\bar vux\in \flexpref(\bar vux)$ we have that $x\not=y$.

Obviously $ywy\in \Factor(v\bar vu)$ because $v\bar vux$ is a unique rich extension of $v\bar v$ and thus $v\bar vuy\not \in \rw$. Hence the palindromic suffix $ywy$ of $v\bar vuy$ is not unioccurrent in $v\bar vuy$, see Proposition \ref{oij5498fr654td222gh}. 

We have that $w$ is unioccurrent in $\bar vu$ and $\bar v\not \in \Factor(w)$, since $w=\lpps(\bar vu)$ and $\bar v$ is unioccurrent in $\bar vu$, see Proposition \ref{uuhj54t58p}. It follows that there are $t_1,t_2\in \Factor(v)$ such that $v=t_1t_2$, $ywy\in \Prefix(t_2\bar vux)$ and $ywy$ is unioccurrent in $t_2\bar vux$. Thus $\lpp(yt_2\bar vux)=ywy$. 

From the fact that $\bar v\not \in \Factor(w)$ follows that $ywy\in \Prefix(t_2\bar v)$. Lemma \ref{ujdd54fvb1a2w} implies that $\vert t_2\vert\geq \vert ux\vert$ and $xu^R\in \Suffix(\ltrim(t_2))$. Just consider that $\vert t_2\vert\leq \vert ux\vert$ would imply that $ywy\in \Factor(\bar vux)$.

Since $xu^R\bar v\in \Suffix(t_2\bar v)$, $w\in \Suffix(\bar vu)$, $ywy\in \Prefix(t_2\bar v)$, and $x\not=y$ it follows that $\occur(t_2\bar v,w)>1$; hence Proposition \ref{yy65se85bj5d} implies that $\lppp(\ltrim(t_2\bar v))\not=w$. It follows that $t_2\bar v\not=\StdExtL(\ltrim(t_2\bar v))$. Consequently $\bar vt_2^R\not=\StdExtR(\rtrim(\bar vt_2^R))$ and thus $\bar vt_2^R\in \flexpref(\bar vt_2^R)$. This completes the proof.
\end{proof}
We step to the main result of this section. The theorem says that if $v\bar vu$ is a unique rich extension of $v\bar v$ and $\bar v$ is the longest proper palindromic suffix of $v\bar v$ then $u$ is not longer than $v\bar v$.
\begin{theorem}
\label{tmlsk566e58t5h4}
If $(v,\bar v,u)\in \Gamma$ then $\vert u\vert\leq \vert v\bar v\vert$.
\end{theorem}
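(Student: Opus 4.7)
The plan is to argue by contradiction: assume $(v,\bar v,u)\in\Gamma$ with $|u|>|v\bar v|$. In particular $|u|>|v|$, so Lemma \ref{ujdd54fvb1a2w}(2) forces $v^R\in\Prefix(u)$. Write $u=u^*x$ with $x$ the last letter, and attempt to apply Proposition \ref{ppph522g1k} to $(v,\bar v,u)$, viewed in the proposition's notation as $(v,\bar v,u^*x)$.

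In the favorable case where the extra hypothesis $\bar vu\in\flexpref(\bar vu)$ holds (i.e., $x$ is not the local standard extension letter of $\bar vu^*$), Proposition \ref{ppph522g1k} provides a decomposition $v=t_1t_2$ with $x(u^*)^R\in\Suffix(\ltrim(t_2))$. Comparing lengths, $|t_2|-1\geq|x(u^*)^R|=|u|$, so $|t_2|\geq|u|+1$. But $t_2\in\Factor(v)$ forces $|t_2|\leq|v|<|u|+1$, an immediate contradiction.

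The obstacle is the complementary case $\bar vu=\StdExtR(\bar vu^*)$, where the proposition does not apply at the final step. Here I would walk backwards through the prefixes $u_j$ of $u$ (with $u_j$ denoting the length-$j$ prefix) and locate the largest index $j\leq|u|$ such that $\bar vu_j\in\flexpref(\bar vu_j)$; at such an index one can apply Proposition \ref{ppph522g1k} to an appropriately adjusted sub-triple to obtain a length contradiction of the same flavor as in the first case. If no such flexed index exists, then the local chain from $\bar v$ coincides with the global standard extension chain from $v\bar v$, and I would combine Proposition \ref{uuhj54t58p} (unioccurrence of $\bar v$ in $\bar vu$), the prefix $v\bar vv^R$ of $v\bar vu$ forced by Lemma \ref{ujdd54fvb1a2w}(2), and Proposition \ref{oij5498fr654td222gh} to exhibit a palindromic factor whose second occurrence inside $v\bar vu$ violates richness.

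The main difficulty will lie in this second case: when the local standard extension chain matches the global one at the final step, Proposition \ref{ppph522g1k} cannot be invoked directly, and a delicate analysis of the palindromic suffix sequence $\lpps(v\bar vu_i)$ along the chain is required. The argument should hinge on exploiting the palindromic prefix $v\bar vv^R$ together with tracking where palindromic ``resets'' (steps at which $v\bar vu_i$ becomes a palindrome) occur along the extension chain.
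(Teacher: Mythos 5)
Your Case 1 is sound and coincides with the paper's argument in the special case where the longest flexed point of $\bar vu$ is $\bar vu$ itself: Proposition \ref{ppph522g1k} gives $\vert t_2\vert\geq \vert u\vert+1$ while $t_2\in\Factor(v)$ gives $\vert t_2\vert\leq \vert v\vert<\vert u\vert$, a contradiction. The genuine gap is in your Case 2. At the maximal flexed index $j<\vert u\vert$, writing $u_j=\bar ux$, Proposition \ref{ppph522g1k} only yields $\vert t_2\vert\geq j+1$, and since $j$ may be far smaller than $\vert v\vert$ this does not contradict $\vert t_2\vert\leq\vert v\vert$; there is no ``length contradiction of the same flavor as in the first case.'' What actually closes the argument --- and what your sketch never invokes --- is the third conclusion of Proposition \ref{ppph522g1k}, namely $\bar vt_2^R\in\flexpref(\bar vt_2^R)$. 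Since $\vert u\vert>\vert v\vert$ forces $v^R\in\Prefix(u)$ and $t_2^R\in\Prefix(v^R)$ (because $v=t_1t_2$), the word $\bar vt_2^R$ is a prefix of $\bar vu$, hence a flexed point of $\bar vu$ of length $\vert\bar v\vert+\vert t_2\vert>\vert\bar v\vert+j$, contradicting the maximality of $j$. This is exactly how the paper finishes; no ``delicate analysis of the palindromic suffix sequence'' is needed once this observation is made, but without it your plan for Case 2 does not go through.

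Two smaller points. First, you must still establish that at least one flexed index $j\geq 1$ exists. The paper does this by writing $\bar v=h\lpps(\bar v)$ and observing that if $\bar vu\in\StdExtRa(\bar v)$ persisted long enough, then $h^R\in\Prefix(u)$ and $\occur(\bar vh^R,\bar v)=2$, contradicting Proposition \ref{uuhj54t58p} directly; your sketch names the right ingredient but aims the contradiction at the richness of $v\bar vu$ rather than at the unioccurrence of $\bar v$ in $\bar vu$, which is what Proposition \ref{uuhj54t58p} actually supplies. Second, the paper splits on $\vert u\vert+\vert\lpps(\bar v)\vert$ versus $\vert\bar v\vert$ rather than assuming $\vert u\vert>\vert v\bar v\vert$, and in the nontrivial case it proves the stronger bound $\vert u\vert\leq\vert v\vert$; this is largely cosmetic, but it is what makes the existence of a flexed index come out cleanly.
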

\begin{proof}
Let $(v,\bar v,u)\in \Gamma$.
If $\vert u\vert +\vert \lpps(\bar v)\vert\leq\vert \bar v\vert$ then clearly $\vert u\vert\leq \vert v\bar v\vert$.
For the rest of the proof suppose that $\vert u\vert +\vert \lpps(\bar v)\vert>\vert \bar v\vert$.
We show that the set of flexed points $\flexpref(\bar vu)$ is nonempty. Let $\bar v=h\lpps(\bar v)$. Proposition \ref{uuhj54t58p} implies that $h^R\not \in \Prefix(u)$, because $\occur(h\lpps(\bar v)h^R, \bar v)=2$. Since $\vert u\vert +\vert \lpps(\bar v)\vert>\vert \bar v\vert$ it follows that there are $\bar u\in\rw$ and $x\in \Alphabet$ such that $\bar ux\in \Prefix(u)$ and $\bar v\bar ux\not=\StdExtR(\bar v\bar u)$; just realize that $h\lpps(\bar v)h^R\in \StdExtRa(h\lpps(\bar v))$. We showed that $\flexpref(\bar vu)\setminus\Prefix(\bar v)\not =\emptyset$.

Without lost of generality, suppose that $\bar v\bar ux$ is the longest flexed point from the set $\flexpref(\bar vu)\setminus\Prefix(\bar v)$ and suppose that $\vert u\vert >\vert v\vert$. Proposition \ref{ppph522g1k} asserts that there are $t_1,t_2\in \rw$ such that $v=t_1t_2$, $\bar vt_2^R\in \flexpref(\bar vt_2^R)$, and $x\bar u^R\in \Suffix(\ltrim(t_2))$. If $\vert u\vert>\vert v\vert$, then $\bar vt_2^R\in \Prefix(\bar vu)$, see Lemma \ref{ujdd54fvb1a2w}. This is a contradiction, since we supposed that $\bar v\bar ux$ is the longest flexed point of $\bar vu$. We conclude that $\vert u\vert \leq \vert v\vert$. This completes the proof.
\end{proof}
The simple corollary is that if $wu$ is a unique rich extension of $w$ then $u$ is not longer than $w$.
\begin{corollary}
If $n\geq 1$ then $\phi(n)\leq n$.
\end{corollary}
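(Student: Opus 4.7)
The plan is to derive the corollary directly from Theorem \ref{tmlsk566e58t5h4}. Fix a rich word $w$ of length $n\geq 1$ and let $u\in \rext(w)$ realize the minimum, so $\vert u\vert=\omega(w)$; the aim is to show $\vert u\vert\leq n$.

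First I would dispose of the trivial cases. If $\omega(w)=0$ there is nothing to prove. If $\vert w\vert=1$, writing $w=x$, then for any letter $y\neq x$ both $xx$ and $xy$ are rich, so $w$ can be extended in at least two ways and $\omega(w)=0$. Thus I may assume $\vert w\vert\geq 2$ and $\vert u\vert\geq 1$.

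Next, I would decompose $w=v\bar v$ with $\bar v=\lpps(w)$. Since $\vert w\vert\geq 2$, the longest proper palindromic suffix $\bar v$ contains at least the final letter of $w$, so $\bar v\in \Alphabet^+$; and since $\bar v$ is a proper suffix, $v\in \Alphabet^+$ as well. By Proposition \ref{kkmnd5s8s658}, both $v$ and $\bar v$ are rich, as factors of the rich word $w$. By the minimality of $\vert u\vert$, no proper prefix of $wu$ strictly extending $w$ can itself be extended in at least two ways, for such a prefix would yield an element of $\rext(w)$ shorter than $u$. Hence $v\bar vu$ is a unique rich extension of $v\bar v$, and $(v,\bar v,u)\in \Gamma$.

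Finally, applying Theorem \ref{tmlsk566e58t5h4} gives $\vert u\vert\leq \vert v\bar v\vert=\vert w\vert=n$, so $\omega(w)\leq n$. Taking the maximum over rich words of length $n$ yields $\phi(n)\leq n$. There is no real obstacle here: the genuine work lives in Theorem \ref{tmlsk566e58t5h4}. The only care required is bookkeeping, namely checking that the decomposition $w=v\bar v$ meets the nonemptiness constraints built into $\Gamma$, which is why the degenerate cases $\vert w\vert=1$ and $\omega(w)=0$ are peeled off at the outset.
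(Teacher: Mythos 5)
Your proof is correct and follows the same route as the paper: peel off the degenerate cases, write $w=v\bar v$ with $\bar v=\lpps(w)$, observe that minimality of $\vert u\vert$ makes $wu$ a unique rich extension so that $(v,\bar v,u)\in\Gamma$, and invoke Theorem \ref{tmlsk566e58t5h4}. You merely spell out the nonemptiness and minimality bookkeeping that the paper's proof labels as clear.
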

\begin{proof}
The corollary is obvious for $n\in \{1,2\}$. If $wu$ is a unique rich extension of $w$, $\vert w\vert\geq 2$, and $\vert u\vert \geq 1$ then there is clearly $(v,\bar v, u)\in \Gamma$ such that $w=v\bar v$. Then the corollary follows from Theorem \ref{tmlsk566e58t5h4}.
\end{proof}
\section{Construction of a Uniquely Extensible Rich Word I}

\begin{definition}
We call a word $xpy$ a \emph{switch} if $x,y\in \Alphabet$, $x\not=y$, and $p\in \Alphabet^*$ is a palindrome.  
Let $\switch(v)=\{w\mid w\in \Factor(v)\mbox{ and }w\mbox{ is a switch}\}$. Let $\switchSuf(v,u)=\switch(vu)\cap\SuffixUnion(v,u)$, where $v,u\in \Alphabet^*$.

Given $S\subseteq \Alphabet^*$, let \[\reduced(S)=\{w\mid w\in S\mbox{ and }w\not \in \bigcup_{u\in S\setminus\{w\}}\Factor(u)\}\mbox{.}\] We call $\reduced(S)$ a \emph{reduction} of $S$.

Suppose $xpy$ is a switch, let $\swc(xpy)=xpx$, where $x,y\in \Alphabet$. We call $\swc(xpy)$ a \emph{switch palindromic closure} of the switch $xpy$. If $B\subset \Alphabet^+$ is a set of switches then we define $\swc(B)=\reduced(\bigcup_{w\in B}\{\swc(w)\})$. 

\end{definition}
\begin{remark}
Note that if $xpy$ is a switch, then $p$ can be the empty word.

The set $\switchSuf(v,u)$ is a set of switches that are suffixes of $v\bar u$ for all nonempty prefixes $\bar u$ of $u$.

The reduction $\reduced(S)$ of the set $S$ is a subset of $S$ and contains only elements that are not proper factors of other elements of $S$.

The switch palindromic closure of a set $B$ is a reduction of the union of all switch palindromic closures of switches from the set $B$.
\end{remark}
\begin{example}
Let $\Alphabet=\{0,1,2\}$, $v=0100110$, and $u=12$. Then we have:
\begin{itemize}
\item
$\switch(vu)=\{01,10,100,110,011,001,010011,001101, 12, 012,11012\}$.
\item
$\switchSuf(v,u)=\left(\switch(v1)\cap\Suffix(v1)\right) \cup \left(\switch(v12)\cap\Suffix(v12)\right)=$\\$\{001101\}\cup \{12, 012,11012\}$.
\item
$\swc(001101)=001100$, $\swc(12)=11$, $\swc(012)=010$, \\ $\swc(11012)=11011$.
\item
$\swc(\switchSuf(v,u))=\reduced(\{001100, 11, 010, 11011\})=$\\$\{001100,010,110011\}$.
\end{itemize}
\end{example}
The following proposition clarifies the importance of switches for a unique rich extension of rich words. The proposition says that if \begin{itemize}\item $wu^R\bar vu$ is a rich word and \item $\bar v$ is the longest palindromic suffix of $wu^R\bar v$ and \item  $x$ is a factor of $w$ for every letter and \item for every switch $t$ which is a suffix of $wu^R\bar v\bar u$ for some $\bar u\in \Prefix(u)$ we have that $\swc(t)$ is a factor of $w$ \end{itemize} then $wu^R\bar vu$ is unique rich extension of $wu^R\bar v$.
\begin{proposition}
\label{tg512c12vcc25e9}
If $w,u,\bar v\in\Alphabet^+$, $wu^R\bar vu\in \rw$, $\lps(wu^R\bar v)=\bar v$, $\Alphabet\cap\Factor(w)=\Alphabet$, and $\swc(\switchSuf(wu^R\bar v,u))\subseteq\Factor(w)$ then $wu^R\bar vu$ is a unique rich extension of $wu^R\bar v$.
\end{proposition}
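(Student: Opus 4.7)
The plan is a proof by contradiction. To show that $wu^R\bar vu$ is a unique rich extension of $wu^R\bar v$ one must check, for every proper prefix $\bar u$ of $u$, that $p:=wu^R\bar v\bar u$ cannot be extended in two ways. Fix such a $\bar u$ and let $a\in\Alphabet$ be the letter for which $\bar ua\in\Prefix(u)$; then $pa$ is a prefix of the rich word $wu^R\bar vu$, hence rich by Proposition \ref{kkmnd5s8s658}. It suffices to show $pb\notin\rw$ for every $b\in\Alphabet\setminus\{a\}$.

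Suppose for contradiction that $pb\in\rw$ for some $b\ne a$, and let $s=\lps(pb)$. Proposition \ref{oij5498fr654td222gh}, applied to $pb$, forces $s$ to be unioccurrent in $pb$. The assumption $\Alphabet\cap\Factor(w)=\Alphabet$ gives $b\in\Factor(w)\subseteq\Factor(p)$, so $s$ cannot consist of the single letter $b$; therefore $s=b\alpha b$ for some palindrome $\alpha\in\Alphabet^*$, and reading this suffix off from $pb$ shows that the last $|\alpha|+1$ letters of $p$ are $b\alpha$.

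The heart of the proof is the observation that $b\alpha a$ is a switch, since $a\ne b$ and $\alpha$ is a palindrome. It is a suffix of $pa=wu^R\bar v(\bar ua)$, and a factor of the rich word $wu^R\bar vu$; since $\bar ua$ is a nonempty prefix of $u$, this places $b\alpha a$ in $\switchSuf(wu^R\bar v,u)$. The hypothesis $\swc(\switchSuf(wu^R\bar v,u))\subseteq\Factor(w)$ then forces $s=b\alpha b=\swc(b\alpha a)\in\Factor(w)$. But $s$ now has an occurrence inside $w$ ending at a position at most $|w|$, while its suffix occurrence in $pb$ ends at position $|p|+1>|w|$; these two occurrences of $s$ are distinct, contradicting the unioccurrence of $s$.

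The main obstacle is spotting the right switch: the correct next letter $a$ and the hypothetical wrong letter $b$ must be combined with the palindromic centre $\alpha$ of $\lps(pb)$ to form exactly the switch $b\alpha a$ that the hypothesis on $\swc$ is tailored to handle. Once this switch is identified, the remaining ingredients (richness transfer via Proposition \ref{kkmnd5s8s658}, unioccurrence via Proposition \ref{oij5498fr654td222gh}, and the length comparison $|p|+1>|w|\ge|s|$) slot in directly.
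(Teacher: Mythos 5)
Your proof is correct and follows essentially the same route as the paper's: assume a second letter $b\neq a$ extends some proper prefix $p=wu^R\bar v\bar u$, split on whether $\lps(pb)$ is a single letter (ruled out because every letter occurs in $w$) or has the form $b\alpha b$, and in the latter case use the switch $b\alpha a\in\switchSuf(wu^R\bar v,u)$ to place $b\alpha b$ inside $w$ and contradict the unioccurrence required by Proposition \ref{oij5498fr654td222gh}. Your write-up is if anything slightly more explicit than the paper's (e.g.\ on why the occurrence inside $w$ is distinct from the suffix occurrence), so no changes are needed.
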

\begin{proof}
We show that there is no prefix $\bar ux\in \Prefix(u)\cap\omega(wu^R\bar v)$, where $x\in \Alphabet$.
Suppose that there is $\bar ux\in \Prefix(u)\cap\omega(wu^R\bar v)$. Let $y\in \Alphabet$ be such that $x\not=y$ and $wu^R\bar v\bar uy\in \rw$. 
Let $t=\lps(wu^R\bar v\bar uy)$. We distinguish two cases:
\begin{itemize}
\item
$t\in \Alphabet$. The assumptions of the proposition guarantee that $t\in \Factor(w)$.
\item
$t=y\bar ty$ for some palindrome $\bar t$. Clearly $y\bar tx\in \switchSuf(wu^R\bar v,u)$ and the assumptions of the proposition guarantee that $t=\swc(ytx)=yty\in \Factor(w)$. 
\end{itemize}
It follows that the longest palindromic suffix $t$ is not unioccurrent, hence $wu^R\bar v\bar uy$ is not rich; see Proposition \ref{oij5498fr654td222gh}. This completes the proof.
\end{proof}

Given a factor $u$ of a word $w$, for us it will not be important if $u$ or $u^R$ is unioccurrent in $w$. For this purpose we define a special notion.
\begin{definition}
If $\sum_{v\in \{u,u^R\}}\occur(w,v)=1$ then we say that the word $u$ is \emph{reverse-unioccurrent} in $w$, where $w,u\in \Alphabet^+$.
\end{definition}
\begin{remark}
The notion of reverse-unioccurrence has also been used in \cite{RukavickaRichWords2019}.
\end{remark}

We show that if the switch $ytx$ is a suffix of the word $wx$ and $ytx$ is reverse-unioccurrent in $wx$ then $wx$ is a flexed point of $wx$.
\begin{lemma}
\label{tj55dk5e8e8}
If $w, wx\in \rw$, $x,y\in \Alphabet$, $ytx\in \Suffix(wx)\cap\switch(wx)$, and $ytx$ is reverse-unioccurrent in $wx$ then $wx \in \flexpref(wx)$.
\end{lemma}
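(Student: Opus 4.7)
The plan is to prove the contrapositive: assume $wx = \StdExtR(w)$ and derive that $ytx$ is not reverse-unioccurrent in $wx$, contradicting the hypothesis. By the definition of the right standard extension, $wx = \StdExtR(w)$ is equivalent to $xp \in \Suffix(w)$, where $p = \lpps(w)$; this immediately gives that $xpx \in \Suffix(wx)$ is a palindromic suffix.

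The key observation is that $ytx \in \Suffix(wx)$ together with the fact that $wx$ ends in $x$ forces $yt \in \Suffix(w)$ after stripping the final letter. In particular, $t$ is a palindromic suffix of $w$, and since $y$ sits strictly before $t$ inside $w$ we have $|t| < |w|$, so $t$ is a \emph{proper} palindromic suffix, giving $|t| \leq |\lpps(w)| = |p|$. Next I would argue the strict inequality $|t| < |p|$: if $|t| = |p|$, then since a word has at most one palindromic suffix of each length, $t = p$; but then $ytx$ and $xpx$ would be two suffixes of $wx$ of the same length $|p|+2$, forcing $ytx = xpx$ and hence $y = x$, contradicting the switch condition.

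With $|t| < |p|$ established, $ytx$ is a proper suffix of $xpx$. Since $xpx$ is a palindrome, the reversal $(ytx)^R = xty$ is a proper prefix of $xpx$, and therefore a factor of $xpx$ and in turn of $wx$. Combining $\occur(wx, xty) \geq 1$ with $\occur(wx, ytx) \geq 1$ (as $ytx$ is a suffix), and using $xty \neq ytx$ (because $y \neq x$), we obtain $\occur(wx, ytx) + \occur(wx, xty) \geq 2$, contradicting the reverse-unioccurrence of $ytx$.

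I expect the strict-inequality step $|t| < |p|$ to be the main obstacle, since it is tempting to settle for the weaker bound $|t| \leq |p|$; the degenerate case $|t| = |p|$ has to be ruled out separately using uniqueness of palindromic suffixes of a given length, and this is what forces the switch structure $y \neq x$ into the argument. Once that is in place, the rest of the proof reduces to the standard observation that a proper suffix of a palindrome gives its reverse as a proper prefix, which is the mechanism producing the extra occurrence of $xty$.
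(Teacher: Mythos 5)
Your proof is correct and follows essentially the same route as the paper's: assume $wx=\StdExtR(w)$, observe that $t$ is a proper palindromic suffix of $w$ strictly shorter than $p=\lpps(w)$, and use the palindromic suffix $xpx$ of $wx$ to reflect the suffix occurrence of $ytx$ into an occurrence of its reversal $xty$, contradicting reverse-unioccurrence. Your explicit ruling-out of the case $\vert t\vert=\vert p\vert$ (via uniqueness of suffixes of a given length and $x\neq y$) is a detail the paper asserts without justification, so your write-up is in fact slightly more complete.
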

\begin{proof}
Suppose that $wx\in \StdExtRa(w)$. If $u=\lpps(w)$ then $\vert t\vert<\vert u\vert$ and $t\in \Prefix(u)\cap\Suffix(u)$. It follows that $xux\in \Suffix(wx)$ and $ty\in \Prefix(u)$, since $yt\in \Suffix(u)$. Consequently $xty\in \Prefix(xu)$, which is a contradiction, because $xty$ is reverse-unioccurrent in $wx$. The lemma follows.  
\end{proof}
There is an obvious corollary of Lemma \ref{tj55dk5e8e8} saying that if $t$ is a switch of $w$, then there is a flexed point $v$ of $w$ such that either $t$ or $t^R$ is a suffix of $v$.
\begin{corollary}
\label{mnr5r152riq4}
If $w\in \rw$, $t\in \switch(w)$ then there is $v\in \flexpref(w)$ such that $\{t,t^R\}\cap \Suffix(v)\not =\emptyset$.
\end{corollary}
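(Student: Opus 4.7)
The plan is to take $v$ to be the shortest prefix of $w$ that admits either $t$ or $t^R$ as a suffix, and then apply Lemma \ref{tj55dk5e8e8} to $v$ to certify that $v$ itself is a flexed point of $w$. The minimality of $v$ will automatically supply the reverse-unioccurrence hypothesis that the lemma requires.

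First I would observe that such a $v$ exists because $t\in\Factor(w)$, so some prefix of $w$ ends in $t$. After interchanging $t$ and $t^R$ if necessary---note that $t^R$ is again a switch, since the reverse of $xpy$ with $p$ palindromic is $ypx$, still of switch form---I may assume $t\in\Suffix(v)$. Writing $t=xpy$, the inequality $x\neq y$ gives $t\neq t^R$, so $t^R$ cannot coincide with the length-$|t|$ suffix of $v$. Moreover, if either $t$ or $t^R$ occurred as a factor of $v$ ending at a position strictly before $|v|$, then the corresponding shorter prefix of $w$ would already satisfy the selection criterion, contradicting the minimality of $v$. Combining these two observations, the only occurrence of $t$ or $t^R$ in $v$ is the suffix $t$ itself, so $t$ is reverse-unioccurrent in $v$.

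To conclude, Proposition \ref{kkmnd5s8s658} gives $v,\rtrim(v)\in\rw$, and the hypotheses of Lemma \ref{tj55dk5e8e8} are satisfied with its ambient word taken to be $v$ and its suffix switch taken to be $t=xpy$. The lemma delivers $v\in\flexpref(v)$, and since the condition defining $\flexpref$ depends only on the comparison of the last letter of $v$ with $\StdExtR(\rtrim(v))$, the inclusion $v\in\Prefix(w)$ immediately lifts this to $v\in\flexpref(w)$. Combined with $t\in\Suffix(v)$, this is exactly the desired conclusion. I expect no real obstacle beyond the minimality bookkeeping that rules out stray occurrences of $t$ or $t^R$ elsewhere in $v$; the remainder is a direct reduction to Lemma \ref{tj55dk5e8e8}.
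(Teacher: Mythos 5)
Your proposal is correct and follows essentially the same route as the paper: the paper's proof takes a prefix $u$ of $w$ with $t$ or $t^R$ as a suffix in which $t$ is reverse-unioccurrent (asserted as ``obvious''; your shortest-prefix choice is exactly how one justifies it) and then invokes Lemma \ref{tj55dk5e8e8}. Your write-up merely makes the minimality bookkeeping and the richness of $v$ and $\rtrim(v)$ via Proposition \ref{kkmnd5s8s658} explicit.
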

\begin{proof}
If $w\in \rw$ and $t\in \switch(w)$, then there is obviously $u\in \Prefix(w)$ such that $\{t, t^R\}\cap \Suffix(u)\not=\emptyset$ and $t$ is reverse-unioccurrent in $u$. Then Lemma \ref{tj55dk5e8e8} implies that $u\not\in \StdExtRa(\rtrim(u))$. This completes the proof.
\end{proof}

In order to construct a word with a prefix containing all switch palindromic closures of its switches we introduce two functions $\StdExtWP$ and $\StdExtOLP$.
\begin{definition}
If $w,t \in \rw\cap \Alphabet^+$ and $t$ is a palindrome then we define
\[\Sigma_{w,t}=\{u\mid u\in \Prefix(w)\mbox{ and }\vert u\vert \geq \vert \lppp(w)\vert\mbox{ and }\rtrim(t)\in \Suffix(u)\}\mbox{.}\] 
If $\Sigma_{w,t}\not=\emptyset$ then let $\bar \pi_{w,t}$ denote the shortest element of $\Sigma_{w,t}$ and let $\pi_{w,t}$ be such that $\bar \pi_{w,t}=\lppp(w)\pi_{w,t}$. 

Let $x=\Prefix(t)\cap\Alphabet$ and let
\[
\StdExtWP(w,t)=\begin{cases}
x(\pi_{w,t})^Rw & \mbox{if } \Sigma_{w,t}\not=\emptyset\mbox{ and }t\not\in \Factor(v^Rw)\\
w &  \mbox{otherwise.}
\end{cases}
\]
In addition we define \[\StdExtWP(w,t_1,t_2,\dots, t_m)=\StdExtWP(\dots (\StdExtWP(\StdExtWP(w,t_1),t_2),\dots ),t_m)\mbox{,}\] where $w$ is a nonempty rich word and $t_1,t_2,\dots, t_m$ are rich nonempty palindromes.

Given $w\in \Alphabet^+$ and $x\in \Alphabet$, let $\maxPow(w,x)=k$ such that $x^k\in \Factor(w)$ and $x^{k+1}\not \in \Factor(w)$. 

Suppose $w\in \rw$, $y\in \Alphabet$, and $k=\maxPow(w,y)$.  Let $\StdExtOLP_y(w)=\StdExtWP(w,y^{k+1})$.
\end{definition}
\begin{remark}
The notation ``ewp'' stands for ``extension with prefix''. It is clear that $(\pi_{w,t})^Rw$ is a left standard extension of $w$ that has as a prefix $\ltrim(t)$.

The notation ``maxPow'' stands for ``maximal power''. If $x\not \in \Factor(w)$ then $\maxPow(w,x)=0$.

The notation ``elpp'' stands for ``extension with letter power prefix''. The function $\StdExtOLP_y(w)$ is the word $yu$ where $u$ is a left standard extension of $w$ such that $y^{\maxPow(w,y)}$ is a prefix. If $\maxPow(w,y)=0$ then $\StdExtOLP_y(w)=yw$. 
\end{remark}
\begin{example}
Let $\Alphabet=\{0,1,2\}$, $w=2020111010111010$, $t_1=11011$, and $t_2=20201$. Then we have:
\begin{itemize}
\item
$\rtrim(t_1)=1101$, $\ltrim(t_1)=1011$, $\lpp(w)=202$.
\item
$\Sigma_{w,t_1}=\{202011101, 202011101011101\}$, $\sigma_{w,t_1}=011101$.
\item
$\StdExtWP(w,t_1)=11011102020111010111010$.
\item
Let $v=\StdExtWP(w,t_1)$. Then $\sigma_{v,t_2}=102020$
\item
$\StdExtWP(w,t_1,t_2)=\StdExtWP(v,t_2)=202020111011102020111010111010$.
\item
$\maxPow(w,1)=3$, $\maxPow(w,2)=1$, and $\maxPow(w,0)=1$.
\item
$\StdExtOLP_1(w)=\StdExtWP(w,1111)=111102020111010111010$.
\item
$\StdExtOLP_2(w)=\StdExtWP(w,22)=22020111010111010$.
\item
$\StdExtOLP_0(w)=\StdExtWP(w,00)=002020111010111010$.
\end{itemize}
\end{example}
We prove that $\StdExtWP(w,t), \StdExtOLP_y(w)\in \rw$ are rich words.
\begin{lemma}
\label{ttf56r25d2f}
If $w,t\in \rw\cap \Alphabet^+$ and $y\in \Alphabet$ then $\StdExtWP(w,t), \StdExtOLP_y(w)\in \rw$.
\end{lemma}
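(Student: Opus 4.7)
The second assertion reduces to the first: by definition $\StdExtOLP_y(w) = \StdExtWP(w, y^{\maxPow(w,y)+1})$, and $y^{\maxPow(w,y)+1}$ is a palindromic rich word, so it suffices to prove $\StdExtWP(w,t)\in\rw$. If the definition of $\StdExtWP(w,t)$ falls to the ``otherwise'' branch, the value is $w$, already rich by hypothesis. In the non-trivial branch, set $\pi = \pi_{w,t}$, $p = \lppp(w)$, and let $x\in\Alphabet$ be the first letter of $t$; then $\StdExtWP(w,t) = x\pi^R w$, and the branch hypotheses give $p\pi\in\Prefix(w)$, $\rtrim(t)\in\Suffix(p\pi)$, together with the exclusion clause $t\notin\Factor(\pi^R w)$.

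\emph{Step 1: $\pi^R w$ is a left standard extension of $w$.} Writing $\pi = c_1 c_2 \cdots c_k$, I show by induction on $i\in\{0,\dots,k\}$ the joint invariant: $\StdExtL^i(w) = c_i c_{i-1}\cdots c_1 w$ and, for $i<k$, $\lppp(c_i\cdots c_1 w) = c_i\cdots c_1\, p\, c_1\cdots c_i$. The case $i=0$ is immediate. For the inductive step, palindromic prefixes of $c_i\cdots c_1 w$ are nested by length, and any palindromic prefix of length $>i$ has the form $c_i\cdots c_1\,M\,c_1\cdots c_i$ with $M$ a palindromic prefix of $w$ satisfying $M c_1\cdots c_i\in\Prefix(w)$. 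The maximal such $M$ is $p$, pinned down by $p=\lppp(w)$ together with the minimality condition on $\bar\pi_{w,t}$; the strict inequality $|w|>|p|+i$ for $i<k$ ensures the candidate is still a proper prefix. Consequently $\StdExtL(c_i\cdots c_1 w)$ prepends the unique letter $a$ with $p\,c_1\cdots c_i\,a\in\Prefix(w)$, and $p\pi\in\Prefix(w)$ forces $a=c_{i+1}$. Applying Lemma~\ref{fjis5e64s8e6jj5} gives $\pi^R w\in\rw$.

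\emph{Step 2: richness of $x\pi^R w$.} Since $t$ is a palindrome, $\ltrim(t)=\rtrim(t)^R$; from $\rtrim(t)\in\Suffix(p\pi)$ I obtain $\ltrim(t)\in\Prefix((p\pi)^R)=\Prefix(\pi^R p)\subseteq\Prefix(\pi^R w)$, and prepending $x$ yields $t=x\ltrim(t)\in\Prefix(x\pi^R w)$. Any occurrence of $t$ in $x\pi^R w$ away from position zero lies entirely inside $\pi^R w$, contradicting $t\notin\Factor(\pi^R w)$; hence $t$ is unioccurrent in $x\pi^R w$. By Corollary~\ref{odd58e62n2so}, it suffices to produce a unioccurrent palindromic prefix for every suffix of $x\pi^R w$: proper suffixes are suffixes of the rich word $\pi^R w$ and inherit such a prefix via Corollary~\ref{odd58e62n2so} applied to $\pi^R w$, while $x\pi^R w$ itself admits $t$. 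Therefore $\StdExtWP(w,t)\in\rw$.

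The main obstacle is the inductive characterization of $\lppp$ in Step~1. One must verify that among the two possible sources of palindromic prefixes of $c_i\cdots c_1 w$---those lying entirely inside the prepended block $c_i\cdots c_1$ (of length $\le i$) and those straddling into $w$---the straddling candidate with $M=p$ dominates and is the longest proper palindromic prefix. The ``shortest'' clause in the definition of $\bar\pi_{w,t}$, combined with $|w|>|p|+i$ for $i<k$, is precisely what rules out edge cases (in particular the scenario where the intermediate word itself is a palindrome, so $\lppp$ drops to the strictly second-longest palindromic prefix) and makes the inductive step go through cleanly.
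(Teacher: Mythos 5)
Your proof is correct and follows essentially the same route as the paper: reduce $\StdExtOLP_y$ to $\StdExtWP$, get richness of $(\pi_{w,t})^Rw$ from Lemma~\ref{fjis5e64s8e6jj5} via left standard extensions, and then use the unioccurrent palindromic prefix $t$ together with Corollary~\ref{odd58e62n2so}. The only difference is that you supply the induction behind the claim $(\pi_{w,t})^Rw\in\StdExtLa(w)$, which the paper dismisses as obvious.
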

\begin{proof}
Because $\StdExtOLP(w)=\StdExtWP_y(w,y^{k+1})$ it suffices to prove that $\StdExtWP(w,t)\in \rw$. From the definition of $\StdExtWP(w,t)$ it is clear that we need to verify only the case where $\Sigma_{w,t}\not=\emptyset$ and $t\not\in \Factor(v^Rw)$. Obviously $(\pi_{w,t})^Rw\in \rw$, since $(\pi_{w,t})^Rw\in \StdExtLa(w)$, see Lemma \ref{fjis5e64s8e6jj5}. Let $x=\Prefix(t)\cap\Alphabet$. Then $\lpp(xv^Rw)=t$ and since $t\not \in \Factor(v^Rw)$ we have $\occur(xv^Rw,t)=1$. Hence Corollary \ref{odd58e62n2so} implies that $xv^Rw\in \rw$.
\end{proof}

\section{Construction of a Uniquely Extensible Rich Word II}

In this section we consider that $\{0,1\}\subseteq\Alphabet$. Let $\ggt_n=\ggt_{n-1}01^n0\ggt_{n-1}$, where $\ggt_1=1$ and $n>1$. For $n,k\geq 2$ we show that the words $0^k\ggt_n$ are rich and that $0^k\ggt_{n-1}01, 0^k\ggt_{n-1}01^{n}$ are the only flexed points of $0^k\ggt_n$ that are not flexed points of $0^k\ggt_{n-1}$. Let $\bar \flexpref_n=\flexpref(0^k\ggt_n)\setminus\flexpref(0^k\ggt_{n-1})$.
\begin{proposition}
\label{hjf568sty}
If $n,k\geq 2$ then $0^kg_n\in \rw$ and
$$\bar \flexpref_n=\{0^k\ggt_{n-1}01, 0^k\ggt_{n-1}01^{n}\}\mbox{.}$$
\end{proposition}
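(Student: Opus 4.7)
The plan is to induct on $n$. The base case $n = 2$ can be verified directly: a short calculation of the palindromic factors of $0^k \ggt_2 = 0^k 101101$ confirms richness, and checking each prefix of length greater than $\vert 0^k \ggt_1 \vert = k + 1$ against its right standard extension produces exactly the two new flexed points $0^k 101 = 0^k \ggt_1 \cdot 01$ and $0^k 1011 = 0^k \ggt_1 \cdot 01^2$, matching the claim.

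For the inductive step, suppose $0^k \ggt_{n-1}$ is rich and its flexed points are as claimed. Writing $W_j = 0^k \ggt_{n-1} 0 1^j$ for $0 \le j \le n$, I will walk through the prefixes of $0^k \ggt_n = 0^k \ggt_{n-1} \cdot 0 1^n 0 \cdot \ggt_{n-1}$ one letter at a time past $0^k \ggt_{n-1}$, computing $\lpps$ at each step and comparing the actual next letter with the letter predicted by $\StdExtR$. The key $\lpps$ identities, provable from the structural facts that $\ggt_m$ is a palindrome for every $m$ (easy induction on the recursion), that the longest run of $1$'s in $\ggt_m$ is exactly $1^m$, and that $01^n 0$ is unioccurrent in $0^k \ggt_n$, are:
\begin{itemize}
\item $\lpps(0^k \ggt_{n-1}) = \ggt_{n-1}$, predicting $0$ --- matches, not flexed;
\item $\lpps(0^k \ggt_{n-1} 0) = 0 \ggt_{n-1} 0$, predicting $0$ while the actual letter is $1$ --- first new flexed point $0^k \ggt_{n-1} 01$;
\item for $1 \le j \le n-1$, $\lpps(W_j) = 1^j 0 \ggt_{n-2} 0 1^j$; the letter just before this palindromic suffix is $1$ for $j \le n-2$ (matching), but $0$ for $j = n-1$ as the middle block $1^{n-1}$ of $\ggt_{n-1}$ is exhausted --- second new flexed point $W_n = 0^k \ggt_{n-1} 0 1^n$;
\item $\lpps(W_n) = 1^n$, predicting $0$ --- matches;
\item for every non-empty prefix $p$ of $\ggt_{n-1}$, $\lpps(0^k \ggt_{n-1} 0 1^n 0 \cdot p) = p^R \cdot 0 1^n 0 \cdot p$; palindromicity of $\ggt_{n-1}$ makes its last $\vert p \vert$ letters coincide with $p^R$, and the letter just before this palindromic suffix (inside $\ggt_{n-1}$) is exactly the next letter of $\ggt_{n-1}$ --- matches.
\end{itemize}
Thus no further flexed points arise past $0^k \ggt_{n-1}$, giving $\bar \flexpref_n = \{0^k \ggt_{n-1} 01,\, 0^k \ggt_{n-1} 0 1^n\}$. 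Richness of $0^k \ggt_n$ follows step by step: Lemma~\ref{fjis5e64s8e6jj5} preserves it at each standard extension, while at the two flexed steps I verify directly that the new $\lps$ is unioccurrent in its prefix (trivial for $1^n$, structural for $1 \cdot 0 \cdot \ggt_{n-2} \cdot 0 \cdot 1$) and invoke Proposition~\ref{oij5498fr654td222gh}.

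The main obstacle is the verification of each $\lpps$ identity --- specifically, ruling out \emph{longer} palindromic suffixes. A candidate palindromic suffix of $W_j$ longer than $1^j 0 \ggt_{n-2} 0 1^j$ that still sits inside $1^{n-1} 0 \ggt_{n-2} 0 1^j$ must have the form $1^m 0 \ggt_{n-2} 0 1^j$ with $m > j$; comparison with its own reverse forces $m = j$, a contradiction. Any candidate extending further to the left crosses the unique $0$ to the left of the middle $1^{n-1}$, forcing its first letter to be $0$ while the last letter is $1$, impossible. The identity in the final bullet is the most delicate: a strictly longer palindromic suffix of $0^k \ggt_{n-1} 0 1^n 0 \cdot p$ would have to contain a second occurrence of $0 1^n 0$ inside $0^k \ggt_{n-1}$, contradicting the uniqueness of $0 1^n 0$ in $0^k \ggt_n$.
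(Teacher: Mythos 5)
Your proof is correct and follows essentially the same route as the paper: induct on $n$, track $\lps$/$\lpps$ of each successive prefix beyond $0^k\ggt_{n-1}$, identify the two positions where the actual letter disagrees with the right standard extension, and certify richness via Lemma \ref{fjis5e64s8e6jj5} together with unioccurrence of the new palindromic suffixes at the two flexed steps. You merely supply more detail than the paper on ruling out longer palindromic suffixes, which the paper dismisses as ``easy to see.''
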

\begin{proof}
Obviously $0^k\ggt_1\in \rw$. Suppose that $0^k\ggt_{n-1}\in \rw$, where $n\geq 2$. We show that $0^k\ggt_{n}\in \rw$. We have that $0^k\ggt_n=0^k\ggt_{n-1}01^n0\ggt_{n-1}$. Note that $\lps(0^k\ggt_{n-1})=\ggt_{n-1}$. It follows that $0^k\ggt_{n-1}0=\StdExtR(0^k\ggt_{n-1})$ and hence $0^k\ggt_{n-1}0\in \rw$. 
It is easy to see that $$\lps(0^k\ggt_{n-1}01)=\lps(0^k\ggt_{n-2}01^{n-1}0\ggt_{n-2}01)=10\ggt_{n-2}01$$ and that $\occur(0^k\ggt_{n-1}01, 10\ggt_{n-2}01)=1$. Hence we 
have $0^k\ggt_{n-1}01\in \rw$; see Proposition \ref{oij5498fr654td222gh}. It follows that $0^k\ggt_{n-1}01^{n-1}\in \StdExtRa(0^k\ggt_{n-1}01)\subseteq \rw$. Also we have that $0^k\ggt_{n-1}01\not= \StdExtR(0^k\ggt_{n-1}0)$ and thus $0^k\ggt_{n-1}01\in\bar \flexpref_n$.

Obviously $\occur(0^k\ggt_{n-1}01^n, 1^n)=1$. Since $1^n$ is a palindrome we have that $0^k\ggt_{n-1}01^n\in \rw$; see Proposition \ref{oij5498fr654td222gh}.
Since $\ggt_{n-1}01^n0\ggt_{n-1}$ is a palindrome we have that $\lps(0^k\ggt_{n-1}01^nt)=t^R1^nt$ for each $t\in \Prefix(0\ggt_{n-1})$. This implies that
$0^k\ggt_{n-1}01^nt\in \StdExtRa(0^k\ggt_{n-1}01^n)\subseteq \rw$ and in particular $0^k\ggt_{n}\in \StdExtRa(0^k\ggt_{n-1}01^n)\subseteq \rw$. 
Clearly $0^k\ggt_{n-1}01^{n}\not = \StdExtR(0^k\ggt_{n-1}01^{n-1})$ and thus $0^k\ggt_{n-1}01^{n}\in\bar \flexpref_n$.

Consequently for each $n,k\geq 2$, we conclude that $0^k\ggt_{n}\in \rw$ and $\bar \flexpref_n=\{0^k\ggt_{n-1}01, 0^k\ggt_{n-1}01^{n}\}\mbox{.}$
\end{proof}

We present all switches of $0^k\ggt_n$. Let $\swtop_n=\left(\switch(0^k\ggt_n)\setminus\switch(0^k\ggt_{n-1})\right)\cap\bigcup_{w\in \bar \flexpref_n}\Suffix(w)$, where $n\geq 3$.
\begin{proposition}
\label{gg5r84d66ee65r}  
If $k\geq 2$ and $n\geq 3$ then 
\[\swtop_n=\{00\ggt_{n-1}01, 01^{n-1}0\ggt_{n-2}01^n, 01^n\}\mbox{.}\]
\end{proposition}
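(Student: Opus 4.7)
My plan rests on three structural facts about $\ggt_n$, each by straightforward induction from $\ggt_n=\ggt_{n-1}01^n0\ggt_{n-1}$ with $\ggt_1=1$: (i) $\ggt_n$ is a palindrome; (ii) $\ggt_n$ contains no factor $00$ (both copies of $\ggt_{n-1}$ begin and end with $1$); and (iii) the largest power of $1$ occurring as a factor of $\ggt_n$ equals exactly $n$. From these, I will deduce the two key consequences that drive the rest of the argument: inside $0^k\ggt_{n-1}$ the factor $00$ appears only in the initial block $0^k$, and the factor $01^{n-1}0$ occurs exactly once, namely at the central occurrence of $01^{n-1}0$ in $\ggt_{n-1}=\ggt_{n-2}01^{n-1}0\ggt_{n-2}$ (uniqueness of $1^{n-1}$ comes from (iii), and the flanking $0$'s from the fact that $\ggt_{n-2}$ starts and ends with $1$).

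For the inclusion ``$\supseteq$'' I would verify directly that each of the three candidates is a switch, with palindromic middles $0\ggt_{n-1}0$, $1^{n-1}0\ggt_{n-2}01^{n-1}$, and $1^{n-1}$ by (i); is a suffix of a member of $\bar\flexpref_n=\{0^k\ggt_{n-1}01,\,0^k\ggt_{n-1}01^n\}$; is a factor of $0^k\ggt_n$ (automatic, since both members of $\bar\flexpref_n$ are prefixes of $0^k\ggt_n$); and is not a factor of $0^k\ggt_{n-1}$. The last point is immediate: $01^n$ and $01^{n-1}0\ggt_{n-2}01^n$ contain $1^n$, excluded by (iii), while $00\ggt_{n-1}$ occurs in $0^k\ggt_{n-1}$ only as its suffix, so $00\ggt_{n-1}01$ cannot be a factor.

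For the inclusion ``$\subseteq$'' I split a given switch $0p1$ by which element of $\bar\flexpref_n$ it is a suffix of. If it is a suffix of $0^k\ggt_{n-1}01$, then $p$ ends in $0$, so by palindromicity of $p$ also begins with $0$, hence $0p1$ begins with $00$. By the first consequence the only $00$'s inside $0^k\ggt_{n-1}01$ sit in the initial $0^k$ block, so the switch has the form $0^{k-i+1}\ggt_{n-1}01$ with $1\le i\le k-1$ and $p=0^{k-i}\ggt_{n-1}0$; comparing $p[2]$ (which is $0$ whenever $k-i\ge 2$) with $p[\,|p|-1\,]$ (the last letter of $\ggt_{n-1}$, which is $1$) forces $k-i=1$, leaving exactly the switch $00\ggt_{n-1}01$.

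If the switch $0p1$ is a suffix of $0^k\ggt_{n-1}01^n$ that does not end in $01$, its last $n-1$ letters are $1$'s; palindromicity of $p$ then propagates $1^{n-1}$ into $p$ immediately after its leading $0$, and matching the $0$ preceding the trailing $1^{n-1}$ either closes off the switch as $01^n$ itself or forces it to start with $01^{n-1}0$. In the latter case the switch factors as $(01^{n-1}0\,P)(01^n)$ where $01^{n-1}0\,P$ is a suffix of $0^k\ggt_{n-1}$ and $P$ is a palindrome from the leftover symmetric pairing in $p$. The uniqueness of $01^{n-1}0$ in $0^k\ggt_{n-1}$ then pins the position of the leading $01^{n-1}0$, which determines $P=\ggt_{n-2}$; this is a palindrome by (i), matching the constraint. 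The main technical obstacle is exactly this palindrome-pushing step, together with ruling out the boundary lengths in which the palindrome condition would want the letter after $0^k\ggt_{n-1}$ to be $0$ rather than $1$; both boundary cases collapse because $0^k\ggt_{n-1}$ ends in a single $1$ (its tail run of $1$'s has length only $1$ for $n\ge 3$).
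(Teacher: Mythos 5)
Your proof is correct and follows the same overall route as the paper: split according to which of the two flexed points from Proposition \ref{hjf568sty} the switch is a suffix of, and use the fact that $1^{n-1}$ occurs exactly once in $\ggt_{n-1}$ (your fact (iii)) to pin down the long switches; your treatment of the suffixes of $0^k\ggt_{n-1}01^n$ is essentially identical to the paper's. The one place you genuinely diverge is the case of suffixes of $0^k\ggt_{n-1}01$: the paper classifies the palindromic suffixes of $0^k\ggt_{n-1}0$ (the longest being $0\ggt_{n-1}0$, all shorter ones being factors of $0\ggt_{n-2}0$) and argues that the short ones only reproduce switches already present in $0^k\ggt_{n-1}$, whereas you observe that palindromicity of $p$ forces the switch $0p1$ to begin with $00$, which can only sit inside the prefix block $0^k$, after which the palindrome condition on $0^{k-j-1}\ggt_{n-1}0$ eliminates every length except that of $00\ggt_{n-1}01$. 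Your version is arguably cleaner, since it shows there are \emph{no} other switch suffixes of length at least $3$ at all, rather than arguing that the others are already switches of $0^k\ggt_{n-1}$. The only loose end is the length-$2$ suffix $01$ (the case $p=\epsilon$, where ``$p$ ends in $0$'' does not apply): it is a switch and a suffix of $0^k\ggt_{n-1}01$, but it belongs to $\switch(0^k\ggt_{n-1})$ and is therefore removed by the set difference in the definition of $\swtop_n$; this deserves one explicit sentence.
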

\begin{proof}
Proposition \ref{hjf568sty} states that $\bar \flexpref_n=\{0^k\ggt_{n-1}01, 0^k\ggt_{n-1}01^{n}\}$. We will consider the switches that are suffixes of the flexed points from $\bar \flexpref_n$: 
\begin{itemize} 
\item
For $0^k\ggt_{n-1}01$: 
Let $t=\lps(0^k\ggt_{n-1}0)$. Obviously $t=0\ggt_{n-1}0$. Since $\occur(t,1^{n-1})=1$ it follows that $t$ is the only palindromic suffix of $0^k\ggt_{n-1}0$ which contains the factor $1^{n-1}$. Consequently each palindromic suffix of $0^k\ggt_{n-1}0$ which is not equal to $t$ is a factor of $0\ggt_{n-2}0\in \Suffix(t)$. Thus $00\ggt_{n-1}01$ is the only switch of $\bar t=00\ggt_{n-1}01$ which is not a switch of $0\ggt_{n-2}0\in\Factor(0^k\ggt_{n-1})$. 
\item
For $0^k\ggt_{n-1}01^{n}$:
Let $t\in \Suffix(0^k\ggt_{n-1}01^n)\cap \switch(0^k\ggt_{n})$.
Since $1^n\in \Suffix(00\ggt_{n-1}01^n)$ it follows that $\vert t\vert\geq n+1$. For $\vert t\vert=n+1$ there is the switch $01^n$. For $\vert t\vert>n+1$ we have that $1^{n-1}\in \Suffix(\rtrim(t))\cap\Prefix(\ltrim(t))$ and because $\occur(00\ggt_{n-1}01^{n-1}, 1^{n-1})=2$ it follows that there is only one switch with $\vert t\vert>n+1$; namely $\bar t=01^{n-1}0\ggt_{n-2}01^n$.  
\end{itemize}
The proposition follows.
\end{proof}
Proposition \ref{gg5r84d66ee65r} and Corollary \ref{mnr5r152riq4} allow us to list all switches of $0^k\ggt_n$.
\begin{corollary}
\label{iu1g21s2ww236}
If $n\geq 3$ then
\[\begin{split}
\switch(0^k\ggt_n)=\bigcup_{i=1}^k\{00^i1\}\cup \{01, 10, 00101, 11010, 01011\}\cup \\ \bigcup_{i=3}^n\{00\ggt_{n-1}01, 01^{n-1}0\ggt_{n-2}01^n, 1^n0\ggt_{n-2}01^{n-1}0, 01^n, 1^n0\} \mbox{.}
\end{split}
\]
\end{corollary}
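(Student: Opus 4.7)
My plan is to prove the corollary by induction on $n \geq 3$, combining Proposition \ref{gg5r84d66ee65r} (which lists new switches attached to new flexed points) with Corollary \ref{mnr5r152riq4} (which says every switch must appear as a suffix or reverse-suffix of some flexed point). The base case $n=3$ I would settle by direct enumeration: list all factors of $0^k \ggt_3$ of each length and check the switch condition, confirming that the result equals the stated set with the inner union reduced to the single block $i=3$.

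For the inductive step, assume the formula holds for $n-1$. The forward inclusion ($\supseteq$) is routine: each listed word is readily verified to have the shape $xpy$ with $p$ a palindrome and to occur as a factor of $0^k \ggt_n$, using the recursion $\ggt_n = \ggt_{n-1} 0 1^n 0 \ggt_{n-1}$ and the fact that $\ggt_n$ is itself a palindrome (a parallel induction). For the reverse inclusion ($\subseteq$), take $t \in \switch(0^k \ggt_n)$. If $t \in \switch(0^k \ggt_{n-1})$, the induction hypothesis places $t$ in the list. Otherwise, Corollary \ref{mnr5r152riq4} produces a flexed point $v \in \flexpref(0^k \ggt_n)$ with $\{t, t^R\} \cap \Suffix(v) \neq \emptyset$, and I split on whether $v$ belongs to $\flexpref(0^k \ggt_{n-1})$ (which is a subset of $\flexpref(0^k \ggt_n)$ since the flexed-point condition depends only on the prefix up to $v$) or to $\bar \flexpref_n$.

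The linchpin of the case analysis is the observation that $\ggt_n$ contains no factor $00$: inductively, $\ggt_{n-1}$ begins and ends with $1$ and has no $00$, so the same holds for $\ggt_n = \ggt_{n-1} 0 1^n 0 \ggt_{n-1}$. Consequently $00$ occurs in $0^k \ggt_n$ only inside the initial $0^k$. If $v \in \flexpref(0^k \ggt_{n-1})$, then $t$ or $t^R$ lies in $\switch(0^k \ggt_{n-1})$, and reversing the listed set for $n-1$ shows that every such reverse is either already in the set (e.g.\ $(01)^R = 10$, $(11010)^R = 01011$, $(01^{i-1}0\ggt_{i-2}01^i)^R = 1^i 0 \ggt_{i-2} 0 1^{i-1} 0$) or ends in $00$ with a $1$ to its left (namely $(0^{i+1}1)^R$, $(00101)^R$, and $(00\ggt_{i-1}01)^R$) and is therefore excluded from being a factor of $0^k \ggt_n$; in the former case $t$ is already in $\switch(0^k \ggt_{n-1})$, contradicting the assumption that $t$ is new. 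When $v \in \bar \flexpref_n$, Proposition \ref{gg5r84d66ee65r} pins $t$ or $t^R$ to one of $\{00\ggt_{n-1}01,\ 01^{n-1}0\ggt_{n-2}01^n,\ 01^n\}$, and applying the same $00$-based reverse check yields exactly the five members of the $i=n$ block, since the reverse $10\ggt_{n-1}00$ is again forbidden.

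The step I expect to be most delicate is precisely this reversal bookkeeping: Corollary \ref{mnr5r152riq4} offers only the disjunction $t$ or $t^R$, so every listed element has to be tracked together with its reverse, and the argument only closes because the $00$ obstruction eliminates exactly the ``missing'' reverses $10^{i+1}$, $10100$, and $10\ggt_{i-1}00$. A secondary nuisance is verifying that $\flexpref(0^k \ggt_{n-1}) \subseteq \flexpref(0^k \ggt_n)$ so that the dichotomy ``$v$ old or $v$ new'' actually partitions the set of flexed points used by Corollary \ref{mnr5r152riq4}, but this follows directly from the definition of $\flexpref$ depending only on prefixes.
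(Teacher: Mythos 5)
Your argument is correct and follows essentially the same route as the paper: both combine Proposition \ref{gg5r84d66ee65r} with Corollary \ref{mnr5r152riq4}, reduce to switches attached to old versus new flexed points, and then dispose of the reversed candidates by checking which of them actually occur as factors. Your explicit use of the absence of the factor $100$ in $0^k\ggt_n$ is a clean justification of the non-occurrence claims (for $10^{i+1}$, $10100$, $10\ggt_{i-1}00$) that the paper merely asserts by inspection.
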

\begin{proof}
Proposition \ref{gg5r84d66ee65r} states that $\swtop_n=\{00\ggt_{n-1}01, 01^{n-1}0\ggt_{n-2}01^n, 01^n\}$ for $n\geq 3$. We may easily check that \begin{itemize}\item $(00\ggt_{n-1}01)^R=10\ggt_{n-1}00\not\in\Factor(0^k\ggt_n)$, \item $(01^{n-1}0\ggt_{n-2}01^n)^R=1^n0\ggt_{n-2}01^{n-1}0\in \Factor(0^k\ggt_n)$, and \item $(00\ggt_{n-1}01)^R\not\in \Factor(0^k\ggt_m)$ for all $m\geq 2$.\end{itemize}

Obviously $\switch(0^k\ggt_2)=\bigcup_{i=1}^k\{00^i1\}\cup\{01, 10, 001, 00101, 01011\}$; recall that $0^k\ggt_2=0^k101101$. Note that $(01011)^R=11010\in\Factor(\ggt_3)$, $(00^i1)^R=10^i0\not \in \Factor(\ggt_m)$, and $(00101)^R=10100\not \in \Factor(\ggt_m)$ for all $i,m\geq 1$. Corollary \ref{mnr5r152riq4} asserts for every switch $t$ of $w$ that there is a flexed points $\bar w\in \flexpref(w)$ such that $\{t,t^R\}\cap \Suffix(\bar w)\not =\emptyset$. The corollary follows.
\end{proof}

Let $j\geq 2$. We define: \begin{itemize} \item $\alpha_{1,j}=00\ggt_{j-1}00$,
\item
$\alpha_{2,j}=01^{j-1}0\ggt_{j-2}01^{j-1}0$, \item
$\alpha_{3,j}=1^{j}0\ggt_{j-2}01^{j}$, and \item
$\alpha_{4,j}=1^{j+1}$. \end{itemize}

The next obvious corollary of Corollary \ref{iu1g21s2ww236} presents the switch palindromic closures of all switches of the word $0^k\ggt_n$.
\begin{corollary}
\label{trlkj555j12}
If $k\geq 2$ and $n\geq 3$ then 
\[
\begin{split}\swc(\switch(0^k\ggt_n))= \{0^k,00100,11011, 01010, \alpha_{4,n}\} \cup \\ \bigcup_{i=3}^n\{\alpha_{1,j}, \alpha_{2,j}, \alpha_{3,j}\}\mbox{.}
\end{split}
\]
\end{corollary}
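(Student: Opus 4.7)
The plan is to compute $\swc(t)$ for each switch $t$ in the list supplied by Corollary \ref{iu1g21s2ww236}, and then apply the reduction $\reduced$ to eliminate factor-contained entries. Everything follows directly from the definition $\swc(xpy)=xpx$.

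For the ``short'' switches: among the family of closures $\swc(0^{i}1)$, each one is a power of $0$, and the reduction retains only the longest, which one checks is $0^k$. The closures $\swc(01)=00$ and $\swc(10)=11$ are factors of $0^k$ and $\alpha_{4,n}=1^{n+1}$ respectively, so they drop out. The remaining three short closures $\swc(00101)=00100$, $\swc(11010)=11011$, and $\swc(01011)=01010$ survive the reduction, providing the first three short entries on the right-hand side.

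For each $i\in\{3,\dots,n\}$, a direct application of $\swc(xpy)=xpx$ yields $\swc(00\ggt_{i-1}01)=\alpha_{1,i}$, $\swc(01^{i-1}0\ggt_{i-2}01^{i})=\alpha_{2,i}$, and $\swc(1^{i}0\ggt_{i-2}01^{i-1}0)=\alpha_{3,i}$. The two remaining entries $\swc(01^{i})=01^{i-1}0$ and $\swc(1^{i}0)=1^{i+1}$ are absorbed by the reduction: $01^{i-1}0$ is a prefix of $\alpha_{2,i}$, while $1^{i+1}$ is a factor of $\alpha_{4,n}=1^{n+1}$ for every $i\leq n$ (and coincides with $\alpha_{4,n}$ for $i=n$, contributing the $\alpha_{4,n}$ listed separately in the statement).

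What remains is to verify that none of the surviving closures is a factor of another, so that the reduction yields exactly the set claimed. This is the only place where real work is hidden: one must rule out containments such as $\alpha_{1,i}\in\Factor(\alpha_{1,j})$ for $i<j$, $\alpha_{2,i}\in\Factor(\alpha_{1,j})$, or $00100\in\Factor(\alpha_{1,i})$. All such checks reduce to two structural observations about $\ggt_m$ that follow by induction from the recursion $\ggt_m=\ggt_{m-1}01^{m}0\ggt_{m-1}$: (i) $00$ is never a factor of $\ggt_m$, so double-$0$ anchors can occur only at the outer boundaries of $\alpha_{1,\cdot}$; and (ii) $1^{m}$ is unioccurrent in $\ggt_m$, so the index $i$ in each of $\alpha_{2,i}$ and $\alpha_{3,i}$ is recoverable from the unique maximal run of $1$'s. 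Together these two facts make every remaining factor comparison immediate, completing the proof.
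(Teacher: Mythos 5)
Your proposal is correct and takes essentially the same approach as the paper, whose own proof consists of citing Corollary \ref{iu1g21s2ww236} and asserting that the reduction ``may easily be verified''; you simply carry out that verification explicitly, including the mutual non-containment checks (via the facts that $00\notin\Factor(\ggt_m)$ and that $1^m$ is unioccurrent in $\ggt_m$) which the paper leaves implicit. The only caveat is an off-by-one you inherit from the paper itself: since $\swc(0^{j}1)=0^{j+1}$, the surviving power of $0$ after the reduction is $0^{k+1}$ rather than $0^{k}$, a discrepancy already present in the statements of Corollaries \ref{iu1g21s2ww236} and \ref{trlkj555j12} and immaterial to the rest of the argument.
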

\begin{proof}
Corollary \ref{iu1g21s2ww236} lists all switches of the word $0^k\ggt_n$. For every switch $t\in\switch(0^k\ggt_n)$ we may easily verify that there is $v\in \swc(\switch(0^k\ggt_n))$ such that $\swc(t)\in \Factor(v)$. This completes the proof.
\end{proof}
\begin{remark}
Note that the palindromes $\alpha_{4,j}$ are factors of $\alpha_{4,n}$ for $j\leq n$. This is the difference to palindromes $\alpha_{i,j}$, where $i\in \{1,2,3\}$. For this reason the palindrome $\alpha_{4,j}$ is not involved in the union formula from $i=3$ to $n$.
\end{remark}

The next definition defines a word $\ggh_n$. Later we show that $\ggh_n$ is a unique rich extension of $\bar \ggh_n$, where $\ggh_n=\bar \ggh_n\ltrim(\ggt_n)$.
\begin{definition}
Let $n\geq 3$. We define:
\begin{itemize}
\item
$\kappa(j,w)=\StdExtOLP_0(\StdExtWP(w, \alpha_{1,j}, \alpha_{2,j}, \alpha_{3,j}, \alpha_{4,j}))$, where $w\in \rw\cap \Alphabet^+$ and $3\leq j\leq n$.
\item
$\ggh_{n,n}=\kappa(n,000\ggt_n00\ggt_n)$.
\item
$\ggh_{n,j}=\kappa(j,\ggh_{n,j+1})$, where $3\leq j< n$.
\item
Suppose that $\Alphabet$ is totally ordered, let $\sigma(\Alphabet)=x_1x_2\dots x_m$, where $x_i\in \Alphabet\setminus\{0,1\}$, $x_i<x_{i+1}$, $1\leq i< m$, and $m=\AlphabetCard-2$.
\item
$\ggh_n=\sigma(\Alphabet)\StdExtWP(\ggh_{n,3}, 00100,11011,01010)$.
\end{itemize}
\begin{remark}
The function $\kappa(j,w)$ extends the word $w$ to a word $\bar ww$ in such a way that $\bar ww$ contains the switch palindromic closures of switches $\alpha_{1,j}$, $\alpha_{2,j}$, $\alpha_{3,j}$, $\alpha_{4,j}$. In addition the longest palindromic prefix of $\bar ww$ is  $0^k$ for some $k>0$.

The word $\ggh_{n,3}$ is constructed by iterative applying of the function $\kappa(j,w)$ starting with the word $000\ggt_n00\ggt_n$.

The word $\ggh_n$ contains the switch palindromic closures of all switches of the word $00\ggt_n$. The suffix of $\ggh_n$ is the word $000\ggt_n00\ggt_n$. As a result $\ggh_n$ has the form $u\rtrim(\ggt_n)1001\ltrim(\ggt_n)$ for some $u\in \Alphabet^*$. It is the form used in Proposition \ref{tg512c12vcc25e9}. 
The prefix $\sigma(\Alphabet)$ of $\ggh_n$ is there to assert that $u$ contains all letters. The order of the letters does not matter.
\end{remark}
We show that $\ggh_n$ is a rich word.
\end{definition}
\begin{lemma}
If $n\geq 3$ then $\ggh_n\in \rw$.
\end{lemma}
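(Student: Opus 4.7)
The plan is to verify $\ggh_n\in\rw$ by building $\ggh_n$ from the inside outward and invoking Lemma~\ref{ttf56r25d2f} at every stage; recall that lemma guarantees $\StdExtWP(w,t),\StdExtOLP_y(w)\in\rw$ whenever $w$ is rich and the palindromic parameter $t$ is rich. So once the innermost word $000\ggt_n 00\ggt_n$ and every palindromic parameter are verified rich, richness of $\ggh_n$ propagates mechanically through the construction.

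First I would establish the base case $000\ggt_n 00\ggt_n\in\rw$. Proposition~\ref{hjf568sty} gives $000\ggt_n\in\rw$, and since $\ggt_n$ is a palindrome, so is $\ggt_n 00\ggt_n$, giving the full word a convenient palindromic skeleton. I would extend $000\ggt_n$ by appending the letters of $00\ggt_n$ one at a time, and at each step identify the new longest palindromic suffix by the recursion $\ggt_n=\ggt_{n-1}01^n 0\ggt_{n-1}$ and show it is unioccurrent, concluding richness via Proposition~\ref{oij5498fr654td222gh}. The key observation is that the leading prefix $000$ admits no internal match capable of producing a second occurrence of any of the palindromes that become longest palindromic suffixes during the extension; in particular, at the final stage $\lps(000\ggt_n 00\ggt_n)=\ggt_n 00\ggt_n$ is unioccurrent because no suffix starting with $0$'s can match a word that starts with $1$.

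Next, each palindromic parameter of a $\StdExtWP$ call — the words $\alpha_{1,j},\alpha_{2,j},\alpha_{3,j},\alpha_{4,j}$ for $3\leq j\leq n$ together with $00100$, $11011$, $01010$ — must itself be a rich palindrome. Palindromicity is immediate from the palindromicity of $\ggt_{j-1}$ and $\ggt_{j-2}$ (or, for $\alpha_{4,j}$, from being a power of a single letter). Richness is obtained by Proposition~\ref{oij5498fr654td222gh} with the same style of case analysis on prefixes, tracking how palindromic suffixes grow through the symmetric wrappings of $\ggt_{j-1}$ or $\ggt_{j-2}$.

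With these pieces in hand, the rest is a chain of invocations of Lemma~\ref{ttf56r25d2f}. A downward induction on $j$ from $n$ to $3$ yields $\ggh_{n,j}\in\rw$: each $\kappa(j,\cdot)$ is four $\StdExtWP$ calls followed by one $\StdExtOLP_0$ call, each preserving richness. Three further $\StdExtWP$ applications with $00100,11011,01010$ give $\StdExtWP(\ggh_{n,3},00100,11011,01010)\in\rw$. Finally, prepending $\sigma(\Alphabet)=x_1 x_2\dots x_m$ one letter at a time: since the $x_i$ are pairwise distinct and lie in $\Alphabet\setminus\{0,1\}$, none of them occurs in the current word, so prepending $x_i$ produces the unioccurrent palindromic prefix $x_i$, and Corollary~\ref{odd58e62n2so} preserves richness. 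Thus $\ggh_n\in\rw$.

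The main obstacle is the base case and the richness verification of $\alpha_{1,j},\alpha_{2,j},\alpha_{3,j}$. Both rest on the same delicate tracking of longest palindromic suffixes under the recursion defining $\ggt_n$; it is mechanical but unavoidable, and is the one point where no earlier lemma does the work for us.
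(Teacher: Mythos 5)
Your overall architecture matches the paper's: verify the innermost word and the palindromic parameters are rich, then propagate richness through the chain of $\StdExtWP$ and $\StdExtOLP_0$ calls via Lemma~\ref{ttf56r25d2f}, and finally prepend $\sigma(\Alphabet)$ (your letter-by-letter argument via Corollary~\ref{odd58e62n2so} is a fine substitute for the paper's observation that concatenating rich words with no common nonempty factor is rich).

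The one substantive issue is the part you single out as ``the main obstacle \dots{} mechanical but unavoidable, \dots{} the one point where no earlier lemma does the work for us.'' That claim is wrong, and it is exactly where your proposal stops short of a proof. For the $\alpha_{i,j}$: each $\rtrim(\alpha_{i,j})$ is a factor of $00\ggt_j$, hence of $00\ggt_n$, which is rich by Proposition~\ref{hjf568sty}; so $\rtrim(\alpha_{i,j})\in\rw$ by Proposition~\ref{kkmnd5s8s658}, and $\alpha_{i,j}=\StdExtR(\rtrim(\alpha_{i,j}))$, so $\alpha_{i,j}\in\rw$ by Lemma~\ref{fjis5e64s8e6jj5}. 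For the base word: $00\ggt_n00\ggt_n\in\StdExtRa(00\ggt_n)$ is rich by the same lemma, and prepending the last $0$ is handled by noting $\lpp(000\ggt_n00\ggt_n)=000$ with $\occur(000\ggt_n00\ggt_n,000)=1$ and invoking Corollary~\ref{odd58e62n2so}. So the ``delicate tracking of longest palindromic suffixes'' you defer to is entirely avoidable; as written, your proposal leaves precisely that tracking as an unexecuted sketch, which is the gap between your outline and a complete proof. Everything else in your plan goes through as stated.
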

\begin{proof}
Lemma \ref{ttf56r25d2f} says that both $\StdExtWP(w,t), \StdExtOLP_0(w)\in \rw$, where $w,t\in \rw$. Proposition \ref{kkmnd5s8s658} and Proposition \ref{hjf568sty} imply that $\rtrim(\alpha_{i,j})\in \rw$, since $\rtrim(\alpha_{i,j})\in \Factor(00\ggt_j)\subseteq \Factor(00\ggt_n)$, where $i\in \{1,2,3,4\}$ and $3\leq j\leq n$.
Because $\alpha_{i,n}=\StdExtR(\rtrim(\alpha_{i,n}))$ we have that $\alpha_{i,n}\in \rw$, see Lemma \ref{fjis5e64s8e6jj5}. Hence $\kappa(j,w)\in \rw$. 

Proposition \ref{hjf568sty} asserts that $0^k\ggt_n\in \rw$. Also it is easy to see that $000\ggt_n00\ggt_n\in \rw$; just consider that $00\ggt_n00\ggt_n\in \StdExtRa(00\ggt_n)$, \[\occur(000\ggt_n00\ggt_n,000)=1\mbox{, and }\lpp(000\ggt_n00\ggt_n)=000\mbox{,}\] see Corollary \ref{odd58e62n2so}.
In consequence $\ggh_{n,j}\in \rw$ for $3\leq j<n$. 
We have that $\StdExtWP(\ggh_{n,3}, 00100,11011,01010)\in \rw$, because $00100,01010,11011\in \rw$. 

Obviously $\sigma(\Alphabet)\in \rw$. Moreover
it is easy to verify that if $w_1,w_2\in \rw$ and $\Factor(w_1)\cap\Factor(w_2)=\epsilon$ then $w_1w_2\in \rw$. Hence \[\sigma(\Alphabet)\StdExtWP(\ggh_{n,3}, 00100,11011,01010)\in \rw\mbox{.}\]
We conclude that $\ggh_n\in \rw$.
\end{proof}

\begin{proposition}
\label{tuj5d68e5f685sttu8}
Let $\bar \ggh_n$ be such that $\ggh_n=\bar \ggh_n \ltrim(\ggt_n)$. If $n>2$ then $\ggh_n$ is a unique rich extension of $\bar \ggh_n$.
\end{proposition}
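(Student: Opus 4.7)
The plan is to invoke Proposition \ref{tg512c12vcc25e9} with the decomposition $\ggh_n = w\,u^R\,\bar v\,u$ obtained as follows. Since $\ggt_n$ is a palindrome, $\ltrim(\ggt_n)^R = \rtrim(\ggt_n)$, and the suffix $000\,\ggt_n\,00\,\ggt_n$ of $\ggh_n$ rewrites as $000 \cdot \rtrim(\ggt_n) \cdot 1001 \cdot \ltrim(\ggt_n)$. I would take $u = \ltrim(\ggt_n)$, $\bar v = 1001$, and $w$ to be the initial factor of $\ggh_n$ terminating in the three zeros that precede the first $\ggt_n$ of this suffix. Then $w u^R \bar v u = \ggh_n$ and $w u^R \bar v = \bar\ggh_n$, so it suffices to verify the four hypotheses of Proposition \ref{tg512c12vcc25e9}.

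Three hypotheses are straightforward. First, $\ggh_n \in \rw$ by the preceding lemma. Second, $\Alphabet \cap \Factor(w) = \Alphabet$ because $\ggh_n$ begins with $\sigma(\Alphabet)$, covering $\Alphabet \setminus \{0,1\}$, while $0$ and $1$ occur throughout the subsequent construction. Third, $\lps(\bar\ggh_n) = \bar v = 1001$: the suffix $1001$ is palindromic because $\ggt_n$ ends in $1$; and since for $n\geq 2$ the word $\ggt_n$ ends in $\ggt_2 = 101101$, the character at position five-from-the-end of $\bar\ggh_n$ is $0$, ruling out any palindromic suffix of length $5$, and longer palindromic suffixes are excluded by direct character comparison against the tail pattern of $\ggt_n$.

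The main work lies in the fourth condition $\swc(\switchSuf(\bar\ggh_n, u)) \subseteq \Factor(w)$. My strategy is: first show that every switch of $\ggh_n$ whose last letter lies within the final $\ltrim(\ggt_n)$ is a switch of $0^k\,\ggt_n$ for the appropriate $k \geq 2$ occurring as a factor earlier in $\ggh_n$; then apply Corollary \ref{iu1g21s2ww236} to enumerate those switches and Corollary \ref{trlkj555j12} to compute their switch palindromic closures, yielding the finite set $\{0^k, 00100, 11011, 01010, \alpha_{4,n}\} \cup \bigcup_{j=3}^{n}\{\alpha_{1,j}, \alpha_{2,j}, \alpha_{3,j}\}$; and finally observe that each of these palindromes is, by construction, a factor of $w$. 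Indeed, the palindromes $\alpha_{1,j}, \alpha_{2,j}, \alpha_{3,j}, \alpha_{4,j}$ for $j = n, n-1, \ldots, 3$ are implanted by the successive applications of $\kappa(j,\cdot) = \StdExtOLP_0(\StdExtWP(\cdot, \alpha_{1,j}, \alpha_{2,j}, \alpha_{3,j}, \alpha_{4,j}))$; the palindromes $00100, 11011, 01010$ are implanted by the outer $\StdExtWP(\ggh_{n,3}, 00100, 11011, 01010)$; and a sufficiently long block $0^k$ is produced by the iterated $\StdExtOLP_0$ calls inside each $\kappa(j,\cdot)$.

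The most delicate step will be the reduction of an arbitrary switch of $\ggh_n$ ending in $u$ to a switch of $0^k\,\ggt_n$. A priori the palindromic interior of such a switch could reach far into $w$, so I would exploit the fact that $\ggt_n\,00\,\ggt_n$ is itself a palindrome, together with Proposition \ref{yy65se85bj5d} on complete returns to palindromic factors in rich words, to argue that any palindromic suffix of a prefix $\bar\ggh_n\bar u$ (with $\bar u \in \Prefix(u)\setminus\{\epsilon\}$) is either entirely inside the last $\ggt_n$, or is centered in the $00$ between the two copies of $\ggt_n$ and therefore is reflected to a palindrome already occurring inside $0^k\,\ggt_n$. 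Once this reduction is in place, the bounded switch enumeration from Section~5 applies and the fourth hypothesis of Proposition \ref{tg512c12vcc25e9} is satisfied, so that proposition yields directly that $\ggh_n$ is a unique rich extension of $\bar\ggh_n$.
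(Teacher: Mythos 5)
Your proposal follows essentially the same route as the paper: the identical decomposition $\ggh_n=w\rtrim(\ggt_n)1001\ltrim(\ggt_n)$, verification of $\lps(\bar\ggh_n)=1001$ and $\Alphabet\subseteq\Factor(w)$, reduction of the relevant switches to those of $0^k\ggt_n$ enumerated in Corollaries \ref{iu1g21s2ww236} and \ref{trlkj555j12}, and the observation that the construction of $\ggh_n$ implants all the required switch palindromic closures into $w$, concluding via Proposition \ref{tg512c12vcc25e9}. The only difference is that you sketch the switch-reduction step (via the palindromicity of $\ggt_n00\ggt_n$ and complete returns) in somewhat more detail than the paper, which simply asserts it.
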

\begin{proof}
Obviously there is $w\in \rw$ such that $\ggh_n=w\rtrim(\ggt_n)1001\ltrim(\ggt_n)$. 
Corollary \ref{trlkj555j12} lists the elements of $\swc(\switch(0^k\ggt_n))$. The construction of $\ggh_n$ guarantees that all these elements  are factors of $w$; formally $\alpha_{i,j}\in \Factor(w)$ for all $i\in \{1,2,3,4\}$ and $3\leq j\leq n$. For $00\ggt_20=001011010$ we can see that $\switch (001011010)=\{01,10,001,011,110,00101,01011,11010\}$. It follows that $\swc(\switch (001011010))=\{00100,01010,000,111,11011\}$. Obviously we have that $\{00100,01010,000,111,11011\}\subseteq \Factor(w)$.

Since $\sigma(\Alphabet)\in \Prefix(\ggh_n)$ we have that $\Alphabet\in \Factor(w)$. It is easy to check that $\lps(w\rtrim(\ggt_n)1001)=1001$. Hence we have \[w\rtrim(\ggt_n)1001\ltrim(\ggt_n)\in \StdExtRa(w\rtrim(\ggt_n)1001)\mbox{.}\]
Thus Proposition \ref{tg512c12vcc25e9} implies that $\ggh_n$ is a unique rich extension of $\bar \ggh_n$.
\end{proof}

Let $\rho(n)=\vert \ggt_n\vert$, where $n\geq 1$. Since $\ggt_n=\ggt_{n-1}01^n0\ggt_{n-1}$, we have $2\rho(n)<\rho(n+1)$ and consequently $\rho(n)< \frac{1}{2^{k}}\rho(n+k)$, where $k>0$. 

We derive an upper bound for length of $\ggh_n$. We start with an upper bound for $\vert\kappa(j,w)\vert$.

\begin{proposition}
\label{trtrtr44g65w6r}
If $j,k>2$, $\bar w\in \rw$, $w=0^k\ggt_{j}\bar w\in \rw$, $\lpp(w)=0^k$ and $\alpha_{i,j}\not \in \Factor(\bar w)$ for $i\in \{1,2,3\}$ then $\vert \kappa(j,w)\vert< \vert w\vert+7\rho(j-1)+5k+5j+10$.
\end{proposition}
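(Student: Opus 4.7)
The plan is to decompose $\kappa(j,w)=\StdExtOLP_0(w_4)$ with $w_0=w$ and $w_i=\StdExtWP(w_{i-1},\alpha_{i,j})$ for $i=1,\ldots,4$, and to bound each of the five increments $\Delta_i:=|w_i|-|w_{i-1}|$ and $\Delta_{\mathrm{OLP}}:=|\kappa(j,w)|-|w_4|$ separately. By the definition of $\StdExtWP$, whenever the extension at step $i$ actually takes place we have $\Delta_i=1+|\bar\pi_{w_{i-1},\alpha_{i,j}}|-|\lppp(w_{i-1})|$, and moreover $w_i$ starts with the palindrome $\alpha_{i,j}$, so $|\lppp(w_i)|\geq|\alpha_{i,j}|$. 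The strategy at each step is therefore (a)~to exhibit an explicit short prefix of $w_{i-1}$ of length at least $|\lppp(w_{i-1})|$ ending in $\rtrim(\alpha_{i,j})$, using the recursive structure of $\ggt_j$, and (b)~to subtract the lower bound $|\alpha_{i-1,j}|$ for $|\lppp(w_{i-1})|$.

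Steps 1, 2, 4 and OLP are relatively quick. In Step 1 the hypothesis gives $\lppp(w)=0^k$, and the prefix $0^k\ggt_{j-1}0$ of $w$ ends in $\rtrim(\alpha_{1,j})=00\ggt_{j-1}0$ (using $k\geq 2$ and the palindromicity of $\ggt_{j-1}$); this yields $\Delta_1\leq\rho(j-1)+2$ and the explicit form $w_1=00\ggt_{j-1}0^k\ggt_j\bar w$. In Step 2 the recursive unfolding $\ggt_j=\ggt_{j-2}01^{j-1}0\ggt_{j-2}01^j0\cdots$ places $\rtrim(\alpha_{2,j})=01^{j-1}0\ggt_{j-2}01^{j-1}$ as a factor of $\ggt_j$ beginning at position $\rho(j-2)$ (using that $01^j$ begins with $01^{j-1}$); combined with $|\lppp(w_1)|\geq|\alpha_{1,j}|=\rho(j-1)+4$ and $\rho(j-1)=2\rho(j-2)+j+1$, this gives $\Delta_2$ of order $\rho(j-1)+k+j$. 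Step 4 is trivial: $w_3$ already starts with $\alpha_{3,j}=1^j0\ggt_{j-2}01^j$, whose own suffix is $1^j=\rtrim(\alpha_{4,j})$, so $\bar\pi_{w_3,\alpha_{4,j}}=\alpha_{3,j}$ and $\Delta_4\leq 1$. The OLP step is handled like Step 1 after noting that none of the prepends in steps 1--4 introduces a $0$-run longer than $k$, so $\maxPow(w_4,0)=k$ and the first occurrence of $0^k$ in $w_4$ is at a computable, not-too-deep position.

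The main obstacle is Step 3. I plan to first compute the explicit form $w_2=\alpha_{2,j}\,\ggt_{j-2}\,0^k\,\ggt_{j-1}\,0^k\,\ggt_j\,\bar w$ by unrolling the previous two prepends. The crucial structural observation is then that every factor of $w_2$ occurring strictly to the left of the embedded $\ggt_j$ has maximal $1$-run at most $j-1$ (a case-by-case check of $\alpha_{2,j}$, $\ggt_{j-2}$, $\ggt_{j-1}$, the $0^k$ blocks, and the boundaries between them). Consequently the factor $1^j$, which is a prefix of $\rtrim(\alpha_{3,j})=1^j0\ggt_{j-2}01^{j-1}$, cannot occur anywhere in $w_2$ until the unique $1^j$ inside the embedded $\ggt_j$, at the known position $2\rho(j-2)+j+2$ of $\ggt_j$. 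This pins down $|\bar\pi_{w_2,\alpha_{3,j}}|$, and combined with $|\lppp(w_2)|\geq|\alpha_{2,j}|=2j+\rho(j-2)+2$ and the identity $\rho(j-1)=2\rho(j-2)+j+1$, the computation yields $\Delta_3\leq 3\rho(j-1)+2k+j+2$. Summing the five contributions $\Delta_1,\Delta_2,\Delta_3,\Delta_4,\Delta_{\mathrm{OLP}}$ and absorbing the linear remainders gives a total strictly less than $7\rho(j-1)+5k+5j+10$, as claimed; the strict inequality is comfortably absorbed by the slack in the individual estimates.
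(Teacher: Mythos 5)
Your proposal follows essentially the same route as the paper's proof: decompose $\kappa(j,w)$ into the four successive applications of $\StdExtWP$ with $\alpha_{1,j},\dots,\alpha_{4,j}$ followed by $\StdExtOLP_0$, locate the first admissible occurrence of each $\rtrim(\alpha_{i,j})$ via the recursion $\ggt_j=\ggt_{j-1}01^j0\ggt_{j-1}$ to bound each increment, and sum (your per-step figures, e.g.\ $\Delta_1\leq\rho(j-1)+2$ and $\Delta_3\leq 3\rho(j-1)+2k+j+2$, agree with the paper's explicit forms of $t_1,\dots,t_4$ after using $\rho(j-1)=2\rho(j-2)+j+1$). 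The only cosmetic difference is that you work with the lower bound $\vert\lppp(w_i)\vert\geq\vert\alpha_{i,j}\vert$ and a $1$-run argument where the paper simply writes out each intermediate word and its longest palindromic prefix exactly; this is the same argument with slightly different bookkeeping.
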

\begin{proof}
Let $t_1=\StdExtWP(w,\alpha_{1,j})$, $t_2=\StdExtWP(t_1,\alpha_{2,j})$, $t_3=\StdExtWP(t_2,\alpha_{3,j})$, and $t_4=\StdExtWP(t_3,\alpha_{4,j})$. Clearly $\kappa(j,w)=\StdExtOLP_0(t_4)$.
It is easy to see that:
\begin{itemize}
\item $t_1=00\ggt_{j-1}0^k\ggt_j\bar w$; $\lpp(t_1)=\alpha_{1,j}=00\ggt_{j-1}00$.
\item $t_2=01^{j-1}0\ggt_{j-2}01^{j-1}0 \ggt_{j-2}0^{k-2}t_1$; \\ $\lpp(t_2)=\alpha_{2,j}=01^{j-1}0\ggt_{j-2}01^{j-1}0$.
\item $t_3=1^j0\ggt_{j-2}01^j0\ggt_{j-1}0^k\ggt_{j-1}0^k\ggt_{j-2}t_2$; $\lpp(t_3)=\alpha_{3,j}=1^j0\ggt_{j-2}01^j$.
\item If $\alpha_{4,j}\in \Factor(w)$ then $t_4=t_3$ and $\lpp(t_4)=\lpp(t_3)$ else $t_4=1t_3$ and $\lpp(t_4)=\alpha_{4,j}=1^{j+1}$.
\item
If $\alpha_{4,j}\in \Factor(w)$ then $\kappa(j,w)=00^k\ggt_{j-1}0t_4$ else \[\kappa(j,w)=00^k\ggt_{j-1}01^j0\ggt_{j-2}0t_4.\] In either case we have $\lpp(\kappa(j,w))=0^{k+1}$.
\end{itemize} 
\noindent
It follows that:
\begin{itemize}
\item
$\vert t_1\vert =\vert w\vert + \rho(j-1) +2$.
\item
$\vert t_2\vert = \vert t_1\vert + (k-2)+ 2\rho(j-2)+ 2(j-1)+4=\vert w\vert+6+\rho(j-1)+2\rho(j-2) + (k-2) + 2(j-1) $.
\item
$\vert t_3\vert=\vert t_2\vert + 2\rho(j-2) + 2\rho(j-1)+2k+3+2j=\vert w\vert+9+3\rho(j-1)+4\rho(j-2) + (k-2) +2k + 2(j-1)+2j$.
\item
$\vert t_4\vert \leq \vert t_3\vert+1$.
\item
$\vert \kappa(j,w)\vert \leq \vert t_4\vert + k+4 + \rho(j-1) + \rho(j-2)+ j=\vert w\vert+14+4\rho(j-1)+5\rho(j-2) + (k-2) +3k + 2(j-1)+3j$.
\end{itemize}
Since $2\rho(j-2)<\rho(j-1)$ we have $\vert \kappa(j,w)\vert<\vert w\vert+7\rho(j-1)+4k+5j+10$. 
\end{proof}
The main theorem of the section presents an upper bound for the length of $\ggh_n$.
\begin{theorem}
\label{tjs5d6r85rrr2f1}
If $n\geq 2$ then $\vert \ggh_n\vert< \frac{11}{2}\rho(n)+(n-3)(5n+22)+3n+20+\AlphabetCard$.
\end{theorem}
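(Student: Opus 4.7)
The plan is to telescope through the recursive construction of $\ggh_n$, bounding each stage using Proposition~\ref{trtrtr44g65w6r} and exact identities for $\rho$.

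First, I would bound $|\ggh_{n,n}|$ by applying Proposition~\ref{trtrtr44g65w6r} to $\kappa(n, 000\ggt_n 00 \ggt_n)$ with $j=n$ and $k=3$; a direct check using that $00$ only appears at the boundary positions of the input verifies that $\bar w = 00 \ggt_n$ contains none of $\alpha_{1,n}, \alpha_{2,n}, \alpha_{3,n}$ as factors, so the proposition applies and gives $|\ggh_{n,n}| < 5 + 2\rho(n) + 7\rho(n-1) + 5n + 25$. Invoking the recurrence $\rho(n) = 2\rho(n-1) + n + 2$ that follows from $\ggt_n = \ggt_{n-1} 01^n 0 \ggt_{n-1}$ lets me rewrite $7\rho(n-1) = \tfrac{7}{2}\rho(n) - \tfrac{7}{2}(n+2)$, producing a bound whose leading term is $\tfrac{11}{2}\rho(n)$.

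For the recursive step, I would apply Proposition~\ref{trtrtr44g65w6r} iteratively to each $\kappa(j, \ggh_{n,j+1})$ for $j$ decreasing from $n-1$ to $3$. The key bookkeeping observation is that each application of $\kappa$ increments the power of the leading $0$-block by exactly one via $\StdExtOLP_0$, so $\lpp(\ggh_{n,j+1}) = 0^{n-j+3}$ and the polynomial $5k + 5j + 10$ collapses uniformly to $5n + 25$, independent of $j$. Unrolling the recursion yields
\[
|\ggh_{n,3}| < |\ggh_{n,n}| + 7\sum_{i=2}^{n-2}\rho(i) + (n-3)(5n+25),
\]
and the $\rho$-sum can be evaluated exactly by telescoping $\rho(i) - 2\rho(i-1) = i+2$ to obtain $\sum_{i=1}^{n-1}\rho(i) = \rho(n) - \tfrac{n^2+5n-4}{2}$. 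Finally, adding $|\sigma(\Alphabet)| = \AlphabetCard - 2$ and the three $\StdExtWP$ calls for the short palindromes $00100, 11011, 01010$ (each contributing only $O(n)$ characters, since $\ggh_{n,3}$ already has a long $0^{n+1}$ prefix and rich palindromic structure, making $\pi_{w,t}$ short in each call) and simplifying the polynomial corrections would give the claimed bound.

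The main obstacle is the arithmetic precision needed to keep the leading coefficient of $\rho(n)$ at exactly $\tfrac{11}{2}$ rather than the $9$ one gets from a naive summation. This requires either exploiting the exact telescoping identity with care, or, more likely, showing that for $j<n$ many of the $\StdExtWP(\cdot, \alpha_{i,j})$ invocations inside $\kappa(j,\cdot)$ are trivial because $\alpha_{i,j}$ is already a factor of $\ggh_{n,j+1}$ from earlier stages; verifying these factor-containments by induction on $j$ (using the explicit structure produced by $\kappa(j+1, \cdot), \ldots, \kappa(n, \cdot)$), together with tracking all polynomial corrections so that the tail matches $(n-3)(5n+22) + 3n + 20$, constitutes the bulk of the technical work.
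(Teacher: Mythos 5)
Your proposal follows the same route as the paper---telescope the recursion $\ggh_{n,j}=\kappa(j,\ggh_{n,j+1})$ through Proposition~\ref{trtrtr44g65w6r}---but, as you yourself flag, it never actually reaches the constant $\tfrac{11}{2}$, and neither of your two suggested remedies can close that gap. The exact evaluation of the $\rho$-sum is a dead end: by your own (correct) identity $\sum_{i=1}^{n-1}\rho(i)=\rho(n)-\tfrac{n^2+5n-4}{2}$, the total contribution $7\rho(n-1)+7\sum_{i=2}^{n-2}\rho(i)=7\sum_{i=2}^{n-1}\rho(i)$ equals $7\rho(n)$ minus a correction that is only quadratic in $n$, while $\rho(n)\geq 2^n$; so no amount of care in the telescoping brings the leading coefficient below $2+7=9$. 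The second remedy (that some of the $\StdExtWP(\cdot,\alpha_{i,j})$ calls are trivial for $j<n$) is unsubstantiated and sits uneasily with Proposition~\ref{trtrtr44g65w6r} itself, whose hypothesis requires $\alpha_{i,j}\notin\Factor(\bar w)$ for $i\in\{1,2,3\}$ and whose proof constructs all the extensions in full. As written, your argument establishes $\vert\ggh_n\vert<9\rho(n)+O(n^2)$, not the stated bound, so it is incomplete as a proof of the theorem.

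That said, the obstruction you hit is genuinely present in the source, and your arithmetic is the correct one. The paper's inequality (\ref{hj566236w5e8}) is asserted to follow from (\ref{hjeq5585f5}) and (\ref{udjeqljios65498t}), yet it starts from $\vert w\vert=2\rho(n)+5$ and sums only over $i=1,\dots,n-3$, i.e.\ over the $n-3$ recursive applications $\kappa(n-1,\cdot),\dots,\kappa(3,\cdot)$; the contribution $7\rho(n-1)+5n+22$ of the initial application $\kappa(n,w)$, which (\ref{hjeq5585f5}) itself records, is dropped. Restoring it replaces $\sum_{i=1}^{n-3}\tfrac{1}{2^{i+1}}<\tfrac{1}{2}$ by $\sum_{i=0}^{n-3}\tfrac{1}{2^{i+1}}<1$ and the multiplier $n-3$ by $n-2$, giving leading coefficient $9$---exactly your count (and correspondingly the $\tfrac{2}{9}$ in the final corollary would degrade to $\tfrac{1}{8}$). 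So the step you could not supply is not a missing idea on your part; it is an off-by-one in the paper's own summation, and the constant $\tfrac{11}{2}$ does not survive a faithful execution of the paper's argument.
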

\begin{proof}
Proposition \ref{trtrtr44g65w6r} implies for $j=n$, $k=3$ and $w=000\ggt_n00\ggt_n$ that
\begin{equation}
\label{hjeq5585f5}
\vert\ggh_{n,n}\vert=\vert\kappa(n,w)\vert< \vert w\vert+7\rho(n-1)+4*3+5n+10\mbox{.} 
\end{equation}
For $n-1$ and $n-2$ we have:
\begin{itemize}
\item
$\vert\ggh_{n,n-1}\vert=\vert\kappa(n-1,\ggh_{n,n})\vert<\vert \ggh_{n,n}\vert+7\rho(n-2)+4*4+5(n-1)+10$.
\item
$\vert\ggh_{n,n-2}\vert=\vert\kappa(n-2),\ggh_{n,n-1}\vert<\vert \ggh_{n,n-1}\vert+7\rho(n-3)+4*5+5(n-2)+10$.
\end{itemize}
And generally for $n-i$:
\begin{equation}
\label{udjeqljios65498t}
\begin{split}
\vert\ggh_{n,n-i}\vert<\vert \ggh_{n,n-i+1}\vert+7\rho(n-i-1)+4(i+3)+5(n-i)+10= \\ \vert \ggh_{n,n-i+1}\vert+7\rho(n-i-1)+5n-i+22 < \\ \vert \ggh_{n,n-i+1}\vert+7\rho(n-i-1)+5n+22 \mbox{.}
\end{split}
\end{equation}
Realize that $\rho(n-i-1)\leq \frac{1}{2^{i+1}}\rho(n)$, $\vert w\vert=2\rho(n)+5$, and $\sum_{i=1}^{n-3}\frac{1}{2^{i+1}}< \frac{1}{2}$. It follows from (\ref{hjeq5585f5}) and (\ref{udjeqljios65498t}) that:
\begin{equation}
\label{hj566236w5e8}
\begin{split}
\vert \ggh_{n,3}\vert<2\rho(n)+5+7\rho(n)\sum_{i=1}^{n-3}\frac{1}{2^{i+1}}+(5n+22)\sum_{i=1}^{n-3}1< \\ \frac{11}{2}\rho(n)+(n-3)(5n+22)+5\mbox{.}
\end{split}
\end{equation}
Obviously $\lpp(\ggh_{n,3})=0^n$ and $0^n\ggt_20\in \Prefix(\ggh_{n,3})$. 

Let $t_1=\StdExtWP(\ggh_{n,3},00100)$, $t_2=\StdExtWP(t_1,11011)$. We have that $\ggh_n=\sigma(\Alphabet)\StdExtWP(t_2,01010)$. We can verify that 
$t_1=001\ggh_{n,3}$, $t_2=11011010^{n-2}t_1$, and $\ggh_n=\sigma(\Alphabet)01010^n10^n10t_2$.

Using (\ref{hj566236w5e8}) we get: 
 \[
 \begin{split}\vert \ggh_n\vert\leq\vert\ggh_{n,3}\vert+\vert 001\vert + \vert11011010^{n-2}\vert+\vert01010^n10^n10\vert+\AlphabetCard< \\ \frac{11}{2}\rho(n)+(n-3)(5n+22)+3n+20+\AlphabetCard\mbox{.}
 \end{split}\]
 This completes the proof.
\end{proof}
Theorem \ref{tjs5d6r85rrr2f1} and Proposition \ref{tuj5d68e5f685sttu8} have the following corollary to the lower bound for $\phi(n)$.
\begin{corollary}
For each real constant $c>0$ and each integer $m>0$ there is $n>m$ such that $\phi(n)\geq (\frac{2}{9}-c)n$. 
\end{corollary}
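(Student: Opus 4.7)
My plan is to exhibit, for each sufficiently large index $n$, a rich word $w_n$ of length $N_n$ whose ratio $\omega(w_n)/N_n$ approaches $2/9$, and then let $n$ grow. The natural candidate is $w_n=\bar\ggh_n$, so that $\ggh_n=\bar\ggh_n\ltrim(\ggt_n)$. Proposition \ref{tuj5d68e5f685sttu8} states that $\ggh_n$ is a unique rich extension of $\bar\ggh_n$; by the definition of unique rich extension, no proper prefix $\bar u$ of $\ltrim(\ggt_n)$ satisfies that $\bar\ggh_n\bar u$ can be extended in at least two ways. Consequently every $u\in\rext(\bar\ggh_n)$ has $\vert u\vert\geq\vert\ltrim(\ggt_n)\vert=\rho(n)-1$, so $\omega(\bar\ggh_n)\geq\rho(n)-1$.

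Next I would combine this lower bound with the length estimate of Theorem \ref{tjs5d6r85rrr2f1}. Since $N_n=\vert\bar\ggh_n\vert=\vert\ggh_n\vert-(\rho(n)-1)$, the bound from that theorem yields
\[
N_n<\frac{9}{2}\rho(n)+(n-3)(5n+22)+3n+21+\AlphabetCard.
\]
The recurrence $\ggt_n=\ggt_{n-1}01^n0\ggt_{n-1}$ gives $\rho(n)>2\rho(n-1)$, hence $\rho(n)>2^{n-1}$, so $\rho(n)$ grows exponentially while the remaining terms in the bound are only quadratic in $n$. Therefore
\[
\frac{\omega(\bar\ggh_n)}{N_n}\geq\frac{\rho(n)-1}{\frac{9}{2}\rho(n)+(n-3)(5n+22)+3n+21+\AlphabetCard}\longrightarrow\frac{2}{9}
\]
as $n\to\infty$.

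To close the argument: given $c>0$ and an integer $m>0$, I would choose $n$ large enough that both $\omega(\bar\ggh_n)/N_n>\frac{2}{9}-c$ and $N_n>m$ hold; the second is available because $N_n\geq\rho(n)-1\to\infty$. Then $\phi(N_n)\geq\omega(\bar\ggh_n)\geq(\frac{2}{9}-c)N_n$ with $N_n>m$, which supplies the integer demanded by the corollary (in the role of $n$ in the statement). The only substantive step is the asymptotic comparison above; no new combinatorial argument is required beyond what Proposition \ref{tuj5d68e5f685sttu8} and Theorem \ref{tjs5d6r85rrr2f1} already deliver, so I do not expect a serious obstacle.
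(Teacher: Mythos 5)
Your proposal is correct and follows essentially the same route as the paper: Proposition \ref{tuj5d68e5f685sttu8} gives $\omega(\bar\ggh_n)\geq\rho(n)-1$, Theorem \ref{tjs5d6r85rrr2f1} bounds $\vert\bar\ggh_n\vert$ above by $\frac{9}{2}\rho(n)$ plus lower-order terms, and the exponential growth of $\rho(n)$ drives the ratio to $\frac{2}{9}$. If anything, your bookkeeping is slightly more careful than the paper's (you correctly keep $\vert\bar\ggh_n\vert=\vert\ggh_n\vert-(\rho(n)-1)$ and retain the $3n+21$ term), so no changes are needed.
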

\begin{proof}
Proposition \ref{tuj5d68e5f685sttu8} implies that $\omega(\vert \bar \ggh_n\vert)\leq\vert \ggt_n\vert-1=\rho(n)-1$. It follows that $\phi(\vert \bar \ggh_n\vert)\geq \rho(n)-1$ and \begin{equation}
\label{ttks5w6re822s5e}
\phi(\vert \bar \ggh_n\vert)\geq \frac{\rho(n)-1}{\vert \bar \ggh_n\vert}\vert \bar \ggh_n\vert\mbox{.}\end{equation}

From Theorem \ref{tjs5d6r85rrr2f1} and Proposition \ref{tuj5d68e5f685sttu8} we have that \[
\frac{\rho(n)-1}{\vert\bar \ggh_n\vert}=\frac{\rho(n)-1}{\vert\ggh_n\vert-\rho(n)-1}=\frac{\rho(n)-1}{\frac{9}{2}\rho(n)+(n-3)(5n+22)+\AlphabetCard+4}.\]
Since $\rho(n)\geq 2^n$ this implies that \begin{equation}\label{tthjs523d15s4d}\frac{\rho(n)-1}{\vert\bar \ggh_n\vert}\leq\frac{2}{9}\mbox{ for }n>3\end{equation}  and \begin{equation}\label{jkj2a36e889t}\lim_{n\rightarrow\infty}\frac{\rho(n)-1}{\vert\bar \ggh_n\vert}=\frac{2}{9}\end{equation} The corollary follows from (\ref{ttks5w6re822s5e}),(\ref{tthjs523d15s4d}), and (\ref{jkj2a36e889t}).
\end{proof}
\section*{Acknowledgments}
The author acknowledges support by the Czech Science
Foundation grant GA\v CR 13-03538S and by the Grant Agency of the Czech Technical University in Prague, grant No. SGS14/205/OHK4/3T/14.

\bibliographystyle{siam}
\IfFileExists{biblio.bib}{\bibliography{biblio}}{\bibliography{../!bibliography/biblio}}

\end{document}